\newcommand*\unit[1]{\bigl[\, \mathsf{#1} \,\bigr]}
\newcommand*\unitt[2]{\bigl[\, \mathsf{#1}\,/\,(\mathsf{#2}) \,\bigr]}
\newcommand{\Bi}{\mathrm{Bi}}
\newcommand{\Fo}{\mathrm{Fo}}
\newcommand{\Tref}{T_{\,\reff}}
\newcommand{\tref}{t_{\,\reff}}
\newcommand{\apr}{\mathrm{apr}}
\newcommand{\fin}{\mathrm{f}}
\newcommand{\obs}{\mathrm{obs}}
\newcommand{\dir}{\mathrm{dir}}
\newcommand{\real}{\mathrm{r}}
\newcommand{\reff}{\mathrm{ref}}
\newtheorem{proposition}{Proposition}
\newcommand{\DF}{\textsc{Du$\,$Fort}--\textsc{Frankel}}
\title{
\vspace{-1.5cm}
Evaluation of the reliability of building energy performance models for parameter estimation }
\author{Julien Berger\textsuperscript{a}$^{\ast}$, Denys Dutykh\textsuperscript{b}\\
\date{\today\vspace{-0.5cm}}}
\begin{document}

\maketitle

\begin{center}
\small
\textsuperscript{a} Univ. Grenoble Alpes, Univ. Savoie Mont Blanc, CNRS, LOCIE,
73000 Chambéry, France \\~
\textsuperscript{b} Univ. Grenoble Alpes, Univ. Savoie Mont Blanc, CNRS, LAMA,	
73000 Chambéry, France \\
$^{\ast}$corresponding author, e-mail address : julien.berger@univ-smb.fr\\
\end{center}

\begin{abstract}

The fidelity of a model relies both on its accuracy to predict the physical phenomena and its capability to estimate unknown parameters using observations. This article focuses on this second aspect by analyzing the reliability of two mathematical models proposed in the literature for the simulation of heat losses through building walls. The first one, named DF, is the classical heat diffusion equation combined with the \DF ~numerical scheme. The second is the so-called RC lumped approach, based on a simple ordinary differential equation to compute the temperature within the wall. The reliability is evaluated following a two stages method. First, samples of observations are generated using a pseudo-spectral numerical model for the heat diffusion equation with known input parameters. The results are then modified by adding a noise to simulate experimental measurements. Then, for each sample of observation, the parameter estimation problem is solved using one of the two mathematical models. The reliability is assessed based on the accuracy of the approach to recover the unknown parameter. Three case studies are considered for the estimation of (\textit{i}) the heat capacity, (\textit{ii}) the thermal conductivity or (\textit{iii}) the heat transfer coefficient at the interface between the wall and the ambient air. For all cases, the DF mathematical model has a very satisfactory reliability to estimate the unknown parameters without any bias. However, the RC model lacks of fidelity and reliability. The error on the estimated parameter can reach $40\%$ for the heat capacity, $80\%$ for the thermal conductivity and $450\%$ for the heat transfer coefficient. \\

\textbf{Key words:} 
Mathematical Model reliability; parameter estimation problem; building thermal performance; 
Heat transfer; \DF ~numerical model; Thermal Circuit Model; 

\end{abstract}

\section{Introduction}

Within the environmental context, several works have been carried out to propose tools to assess the building energy performance. Among all physical phenomena involved, these tools are based on models to assess the heat losses through building walls. 

As illustrated in Figure~\ref{fig:building_model}, several steps can be identified in the construction of a model for the prediction of heat losses through walls. First, the model is based on a qualitative representation of the \emph{real physical} world. One can easily observe that in winter time, the heat flux is directed from the inside to the outside part of the wall. Then, this knowledge is translated into the so-called \emph{mathematical model}\footnote{The word ``mathematical'' is used because the mathematical language is used to write the model.}. The mathematical model includes the governing equations of the physical phenomena. The third step aims at building a \emph{numerical model}\footnote{The word ``numerical'' is adopted in the sense of computational.} to obtain a solution of the governing equations.
This model can employ numerical\footnote{Here, the word ``numerical'' stands for the type of method used to compute the solution.} or analytical, \emph{i.e.} approximate or exact, methods with defined discretisation of the continuous variables. Last, once the model is built, the numerical model can be developed using computational technologies so that predictions and analysis of the physical phenomena can be carried out. An alternative application is to estimate uncertain parameters entering in the definition of the model using experimental observations. Figure~\ref{fig:building_model} also illustrates the approximations introduced modeling procedure. Namely, some physical approximations are added when defining the mathematical model. Then, some numerical approximations appear when building the computational tool to solve the problem. 

Undeniably, the main issue is to build \emph{efficient} models. The word efficiency can designate several aspects. One important is to validate the numerical model by comparison to reference solutions. This work intends to check the numerical approximations introduced when obtaining the solution of the governing equations. A second aspect is to evaluate the fidelity or reliability of the model by comparison with experimental observations. To be more precise, the objective is to assess the physical approximations when translating the qualitative representation into the mathematical model. So the reliability of a model is its capacity to predict the physical phenomena. It is also the model's ability to estimate unknown parameters using experimental observations. Other criteria of efficiency can be based on computational costs, ease to implement, \emph{etc.} 

Nowadays, in building physics, two main mathematical models are employed in the literature to predict the heat transfer through building walls. The first one is the most known mathematical model, based on the heat diffusion equation proposed in the early work of \textsc{J. Fourier} in 1822 in \emph{Théorie analytique de la chaleur} \cite{Fourier_1822}. During the second world war, when no powerful computers were available, an analogical model was proposed to solve the heat diffusion equation as illustrated in Figure~\ref{fig:first_RC_model}. This ingenuous approach enabled fast computations to predict the heat transfer through walls caused by fire. Interested readers are invited to consult \cite{Kirkpatrick_1943,Lawson_1953,Robertson_1958}. Then, with the hardware evolution, numerical models have been proposed. Today, they are based on numerical approaches such as finite-differences or finite-volumes as surveyed in \cite{Mendes_2017}. The second mathematical model is the so-called RC approach. A lumped model for the heat diffusion equation is proposed based on ordinary differential equation \cite{Fraisse_2002,Kampf_2007,Naveros_2015}.

Despite the simplicity of these models, several works can be referenced in the literature using these two mathematical models to estimate uncertain parameters in building walls as for instance \cite{Roels_2017,Jimenez_2009} for the RC model or \cite{Berger_2016} for the heat transfer one. However, to our knowledge, no works have been proposed to evaluate the reliability of the two mathematical models. A complementary work \cite{Berger_2018b} investigates the fidelity of the two approaches to predict the physical phenomena with comparison to experimental observations. As a second step, this work intends to appraise their reliability to estimate unknown parameter from experimental observations through the resolution of parameter estimation problem \cite{Kabanikhin_2008,Kabanikhin_2011}. The article is organized as follows. First section presents the two direct mathematical models. Then, the procedure to evaluate the reliability for the estimation of unknown parameters is described. Particularly, samples of experimental observations are first generated using a pseudo-spectral numerical model for the heat equation. Then, for each generated sample, the parameter estimation problem is solved using two mathematical models. The metrics for evaluating the reliability are also proposed in Section~\ref{sec:evaluating_reliability}. In Section~\ref{sec:case_studies}, three case studies are considered for the estimation of (i) the heat capacity, (ii) the thermal conductivity and (iii) the heat transfer coefficient at the interface between the material and the ambient air. In last Section~\ref{sec:conclusion}, some general remarks are synthesized.
 
\begin{figure}
\centering
\includegraphics[width=.90\textwidth]{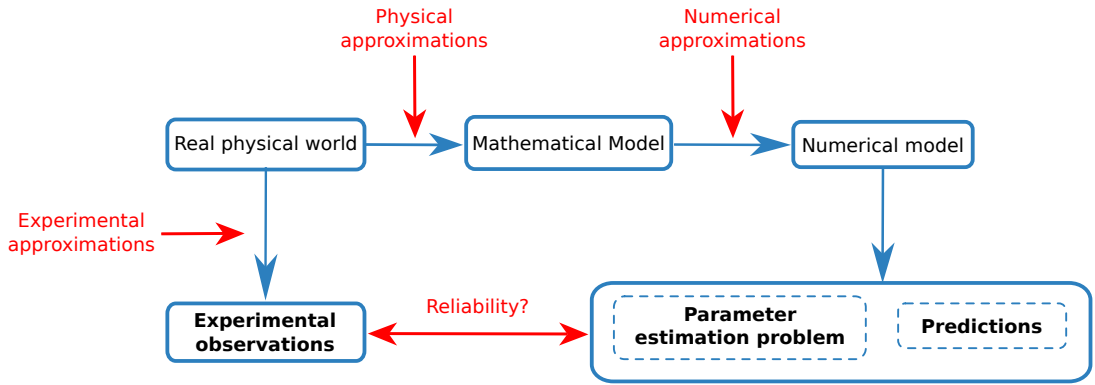}
\caption{Procedure of building models to represent the heat losses through building walls.}
\label{fig:building_model}
\end{figure}

\begin{figure}
\centering
\subfigure[\label{fig:RC_equivalence}]{\includegraphics[width=.45\textwidth]{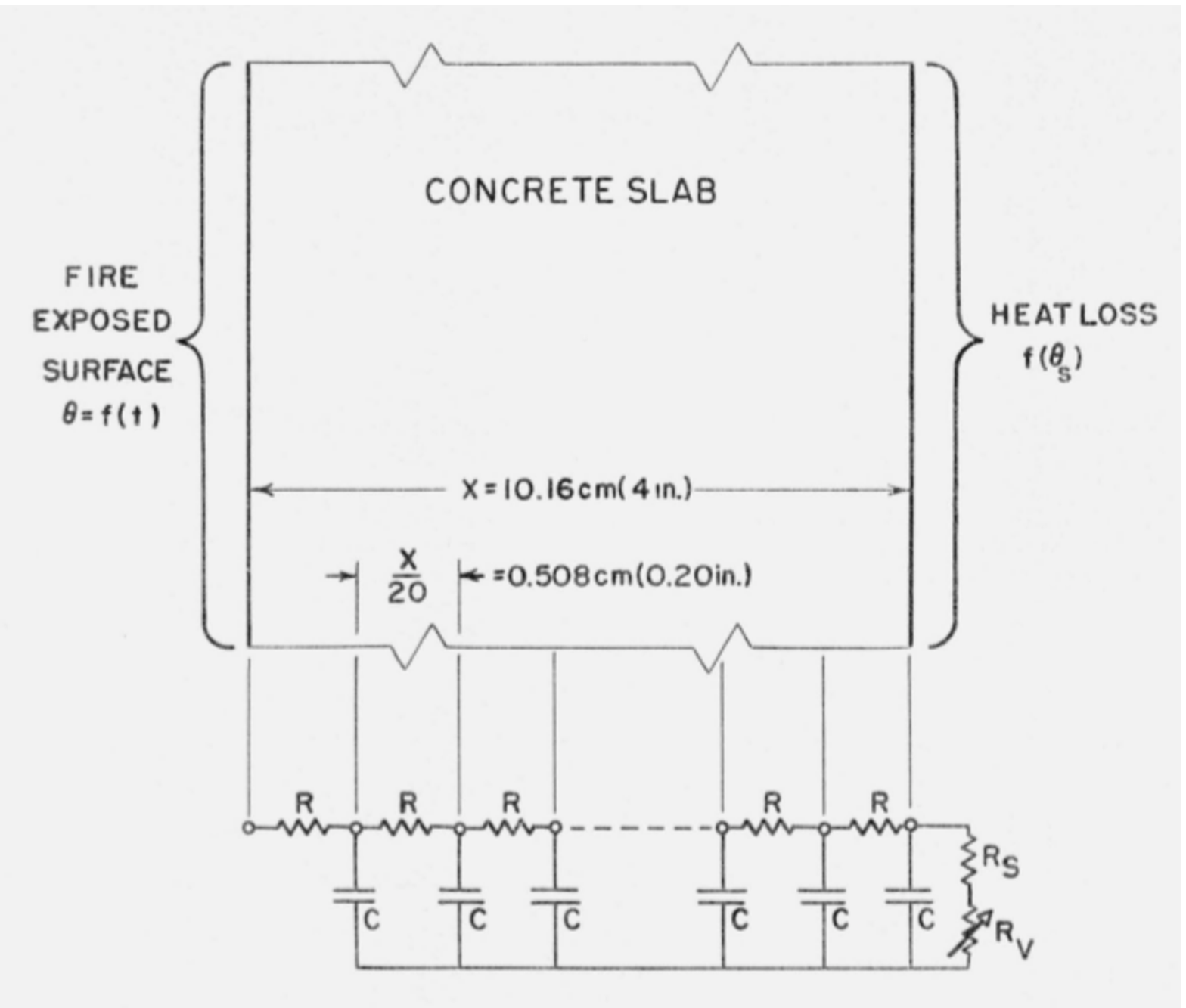}}  \hspace{0.2cm}
\subfigure[\label{fig:RC_analogique}]{\includegraphics[width=.45\textwidth]{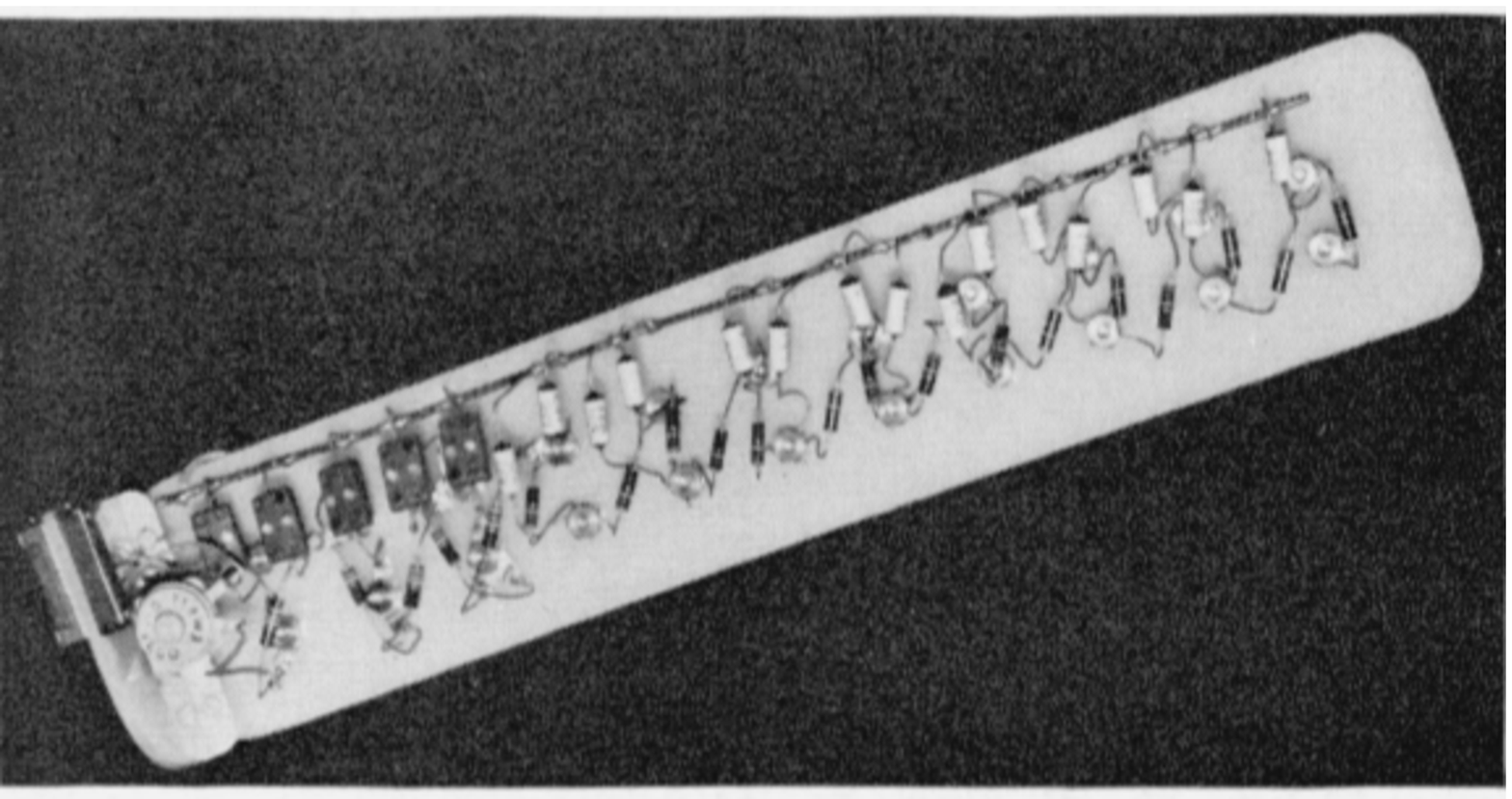}}
\caption{Illustration of the analogy between the physical problem of heat transfer through a wall with the electrical model \emph{(a)} and picture of the analogical model developed \emph{(b)}. Both illustrations are taken from \cite{Robertson_1958} with the authorization of the Journal of Research of NIST.}
\label{fig:first_RC_model}
\end{figure}

\section{Direct Mathematical models}
\label{sec:mathematical_model}

In this section, two models used to represent the physical phenomena of heat transfer in the wall are described.  Each model includes the \emph{mathematical model} translating the physical phenomena using a mathematical formalism. The \emph{numerical model} denotes the numerical method used to compute the solution of the mathematical model on a given space-time mesh. In the context of parameter estimation problem, the mathematical model is also called the \emph{direct model}. First, the heat diffusion model using the \DF ~numerical scheme is presented. This approach is denoted by DF in the whole manuscript. Then, the lumped RC model is described. 

\subsection{Heat diffusion with \DF ~numerical model}
\label{sec:DF_model}

\subsubsection{Heat diffusion in building material}

The field of interest is the temperature $T \ \unit{K}$ evolving in a building wall material illustrated in Figure~\ref{fig:domain}. The space domain is defined by $x \, \in \bigl[\, 0 \,,\, L \,\bigr]\,$, where $L \ \unit{m}$ is the length of the wall. The time domain is defined by $t \, \in \, \bigl[\, 0 \,,\, t_{\,\fin} \,\bigr] \,$. Thus, the function $T$ is defined by:
\begin{align*}
T \,:\, \bigl[\, 0 \,,\, L \,\bigr] \, \times \, \bigl[\, 0 \,,\, t_{\,\fin} \,\bigr] & \longrightarrow \, \mathbb{R}_{\,>\,0} \,, \\[4pt]
\bigl(\,x \,,\, t\,\bigr) & \longmapsto \, T(\,x\,,\,t\,) \,.
\end{align*}
The temperature verifies the diffusion equation:
\begin{align}
\label{eq:heat1d}
c \cdot \pd{T}{t} \egal k \cdot \pd{^{\,2} T}{x^{\,2}} \,,
\end{align}
where $k \ \unitt{W}{m \cdot K}$ is the thermal conductivity and $c \ \unitt{J}{kg \cdot K}$ is the volumetric heat capacity. 

At the boundaries, in the ambient air, time dependent temperatures $T_{\,\infty\,,\,L}\,(\,-\,)$  and $T_{\,\infty\,,\,R}\,(\,-\,)$ force the heat diffusion into the wall: 
\begin{align*}
T_{\,\infty\,,\,L\,,\,R} \,:\, \bigl[\, 0 \,,\, t_{\,\fin} \,\bigr] & \longrightarrow \, \mathbb{R}_{\,>\,0} \,, \\[4pt]
t & \longmapsto \, T_{\,\infty\,,\,L\,,\,R}\,\bigl(\,t\,\bigr) \,.
\end{align*}
\textsc{Robin}-type boundary conditions are assumed at the interface between ambient air and the wall,:
\begin{subequations}
\label{eq:BC}
\begin{align}
k \cdot \pd{T}{n} \egal h_{\,L} \cdot \Bigl(\, T \moins T_{\,\infty\,,\,L} \,\Bigr)\,, \qquad x \egal 0 \,, \qquad t \, \geqslant \, 0 \,, \label{eq:BC_L}\\[4pt]
k \cdot \pd{T}{n} \egal h_{\,R} \cdot \Bigl(\, T \moins T_{\,\infty\,,\,R} \,\Bigr)\,, \qquad x \egal L \,, \qquad t \, \geqslant \, 0 \,, \label{eq:BC_R}
\end{align}
\end{subequations}
where $h \ \unitt{W}{m^{\,2}\cdot K}$ is the surface heat transfer coefficient. The derivative is defined as $\displaystyle \pd{y}{n} \, \eqdef \, \pd{y}{x} \cdot n$ with $n \egal \pm \, 1\,$, depending on the considered boundary $\bigl\{\,0 \,,\, L \,\bigr\} \,$. 

A uniform temperature is assumed in the material as initial condition:
\begin{align*}
T \egal T_{\,0} \,, \qquad t \egal 0 \,.
\end{align*}
with $T_{\,0}\,(\,-\,)$ a constant function defined by:
\begin{align*}
T_{\,0} \,:\, \bigl[\, 0 \,,\, L \,\bigr] & \longrightarrow \, \mathbb{R}_{\,>\,0} \,, \\[4pt] 
x & \longmapsto \, T_{\,0} \,.
\end{align*}

\begin{figure}
\centering
\includegraphics[width=.45\textwidth]{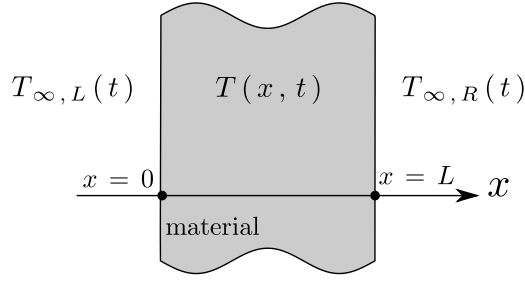}
\caption{Illustration of the physical domain.}
\label{fig:domain}
\end{figure}

\subsubsection{The \DF ~numerical model}

\paragraph{Dimensionless formulation}~\\

As discussed and thoroughly motivated in \cite{Berger_2018a,Nayfeh_2000,Kahan_1979}, it is of capital importance to obtain a dimensionless problem before elaborating a numerical model. For this, dimensionless fields are defined: 
\begin{align*}
u & \eqdef \, \frac{T}{\Tref}  \,,
&& u_{\,\infty\,,\,L} \, \eqdef \, \frac{T_{\,\infty\,,\,L}}{\Tref} \,,
&& u_{\,\infty\,,\,R} \, \eqdef \, \frac{T_{\,\infty\,,\,R}}{\Tref} \,,
&& u_{\,0} \, \eqdef \, \frac{T_{\,0}}{\Tref} \,,
\end{align*}
where $\Tref$ is user--defined reference temperature. The space and time coordinates are also transformed into dimensionless variables:
\begin{align*}
t^{\,\star} & \eqdef \, \frac{t}{\tref} \,,
&& x^{\,\star} \, \eqdef \, \frac{x}{L} \,.
\end{align*}
The thermophysical properties and the heat transfer coefficients are re-scaled using reference values: 
\begin{align*}
k^{\,\star} & \eqdef \, \frac{k}{k_{\,\reff}} \,,
&& c^{\,\star} \eqdef \, \frac{c}{c_{\,\reff}} \,,
&& h^{\,\star}_{\,L} \eqdef \, \frac{h_{\,L}}{h_{\,\reff}} \,,
&& h^{\,\star}_{\,R} \eqdef \, \frac{h_{\,R}}{h_{\,\reff}} \,,
\end{align*}
Last, dimensionless numbers are defined as the \textsc{Fourier} and \textsc{Biot} ones:
\begin{align*}
\Fo & \eqdef \, \frac{\tref \cdot k_{\,\reff}}{c_{\,\reff} \cdot L^{\,2}} \,,
&& \Bi \, \eqdef \, \frac{h_{\,\reff} \cdot L}{k_{\,\reff}} \,.
\end{align*}
The former quantifies the magnitude of diffusion inside the material while the second evaluates the importance of heat penetration from the ambient air to the solid part. With these transformations, the dimensionless problem is written as:
\begin{align}
\label{eq:heat_diffusion_dimless}
c^{\,\star} \cdot \pd{u}{t^{\,\star}} \egal \Fo \cdot k^{\,\star} \cdot \pd{^{\,2} u}{x^{\,\star \,2}} 
\end{align}
with the \textsc{Robin}-type boundary condition:
\begin{align}
\label{eq:BC_dimless}
k^{\,\star} \cdot \pd{u}{n} & \egal \Bi \cdot h^{\,\star}_{\,L} \cdot \Bigl(\, u \moins u_{\,\infty\,,\,L} \,\Bigr) \,, && x^{\,\star} \egal 0 \,, \qquad t^{\,\star} \, \geqslant \, 0 \\[4pt]
k^{\,\star} \cdot \pd{u}{n} & \egal - \, \Bi \cdot h^{\,\star}_{\,R} \cdot \Bigl(\, u \moins u_{\,\infty\,,\,R} \,\Bigr) \,, && x^{\,\star} \egal 1 \,, \qquad t^{\,\star} \, \geqslant \, 0  \,.
\end{align}
The initial condition is expressed as:
\begin{align*}
u\,\bigl(\,x^{\,\star}\,,\,0 \,\bigr) \egal u_{\,0} \,.
\end{align*}
To have a well-posed problem, initial and boundary conditions must be compatible. The dimensionless formulation is written in a way to highlight the parameter $k^{\,\star} \,$, $c^{\,\star}\,$ and $h^{\,\star}$ that will be estimated by solving the identification problem. In this work, the numerical values of the reference parameters are $t_{\,\reff} \egal 3600 \ \mathsf{s} \,$, $T_{\,\reff} \egal 273.15 \ \mathsf{K} \,$, $k_{\,\reff} \egal 1 \ \mathsf{W\,/\,(m \cdot K)} \,$, $c_{\,\reff} \egal 1.5 \ \mathsf{MJ\,/\,(m^{\,3}\cdot K)} $ and $h_{\,\reff} \egal 5 \ \mathsf{W\,/\,(m^{\,2} \cdot K)} \,$.

\paragraph{Numerical model}~\\

A uniform discretisation is considered for space and time intervals. The discretisation parameters are denoted using $\Delta t^{\,\star}$ for the time and $\Delta x^{\,\star}$ for the space. The discrete values of the function $u \, (\,x^{\,\star} \,,\,t^{\,\star}\,)$ are denoted by $u_{\,j}^{\,n} \ \eqdef \ u\,(\,x_{\,j}\,,\,t^{\,n}\,)$ with $j \egal 1 \,, \ldots \,, N_{\,x}$ and $n \egal 1 \,, \ldots \,, N_{\,t} \,$. It is important to note that the numerical model is built for the dimensionless problem using the \texttt{Matlab}\texttrademark environment.

The so--called \DF ~scheme is used. It is an explicit numerical scheme with an increased stability domain. Interested readers are invited to consult \cite{Du_Fort_1953} for the original work, \cite{Taylor_1970,Gasparin_2017a} for the results on the stability analysis and \cite{Gasparin_2017a,Gasparin_2017b} for further details and its applications for heat and moisture transfer in building porous materials. Since a complete description is provided in \cite{Gasparin_2017b}, only the main steps are recalled here. The idea of the approach is to replace the term $u_{\,j}^{\,n} \, \longleftarrow \, \displaystyle \frac{1}{2} \cdot \Bigl(\, u_{\,j}^{\,n+1} \plus u_{\,j}^{\,n-1} \,\Bigr)$ in the forward in time central scheme to obtain the following fully discrete dynamical system:
\begin{align}
\label{eq:DF}
u_{\,j}^{\,n+1}  \egal 
\frac{1}{1 \plus \lambda} \cdot
\Biggl(\, & \lambda \cdot u^{\,n}_{\,j+1}
\plus  \lambda \cdot u^{\,n}_{\,j-1}
\plus \Bigl(\, 1 \moins \lambda \, \Bigr) \cdot u^{\,n-1}_{\,j} \, \Biggr) \,,
\end{align}
with the coefficient:
\begin{align*}
\lambda \, \eqdef \, 2 \cdot \Fo \cdot \frac{k^{\,\star}}{c^{\,\star}} \cdot \frac{\Delta t^{\,\star}}{ \bigl(\, \Delta x^{\,\star} \,\bigr)^{\,2}} \,.
\end{align*}
It is important to remind that the boundary conditions are discretized using second order approach for the space derivatives to maintain the properties of stability \cite{Taylor_1970}. So, the boundary conditions~\eqref{eq:BC_dimless} are discretized according to:
\begin{align*}
\frac{k^{\,\star}}{2 \cdot \Delta x^{\,\star}} \cdot \Bigl(\, - \, u_{\,3}^{\,n} \plus 4 \cdot u_{\,2}^{\,n} \moins 3 \cdot u_{\,1}^{\,n}\,\Bigr)
& \egal \Bi_{\,L} \cdot h^{\,\star}_{\,L} \cdot \Bigl(\, u_{\,1}^{\,n} \moins u_{\,\infty\,,\,L} \,\Bigr) \,, \\[4pt]
\frac{k^{\,\star}}{2 \cdot \Delta x^{\,\star}} \cdot \Bigl(\, u_{\,N_{\,x}}^{\,n} \moins 4 \cdot u_{\,N_{\,x}-1}^{\,n} \plus 3 \cdot u_{\,N_{\,x}-2}^{\,n}\,\Bigr)
& \egal - \, \Bi_{\,R} \cdot h^{\,\star}_{\,R} \cdot \Bigl(\, u_{\,N_{\,x}}^{\,n} \moins u_{\,\infty\,,\,R} \,\Bigr) \,.
\end{align*}
The stencil of the \DF ~scheme is illustrated in Figure~\ref{fig:DF_stencil}.

\begin{figure}
\centering
\includegraphics[width=.45\textwidth]{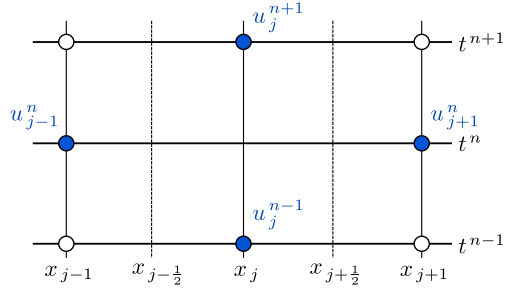}
\caption{Illustration of the \DF ~ space-time stencil.}
\label{fig:DF_stencil}
\end{figure}

\subsection{The lumped RC model}
\label{sec:RC_model}

The second model is the RC one. Interested readers are invited to consult \cite{Davies_2004,Fraisse_2002} for more details on this approach and \cite{Deconinck_2016,Reynders_2014,Jimenez_2009} for examples of recent applications in building physics.

\subsubsection{Formulation}

In the lumped RC approach, three temperature points are defined in the material as illustrated in Figure~\ref{fig:cell_RC}. Two temperatures are defined at the boundaries of the material:
\begin{align*}
T_{\,1} \,:\, \bigl[\, 0 \,,\, t_{\,\fin} \,\bigr] & \longrightarrow \, \mathbb{R}_{\,>\,0} \,, \\[4pt]
t & \longmapsto \, T(\,0\,,\,t\,) 
\end{align*} 
and
\begin{align*}
T_{\,3} \,:\, \bigl[\, 0 \,,\, t_{\,\fin} \,\bigr] & \longrightarrow \, \mathbb{R}_{\,>\,0} \,, \\[4pt]
t & \longmapsto \, T(\,1\,,\,t\,) \,.
\end{align*} 
The temperature $T_{\,2}$ is defined inside the cell $\mathcal{C}$ is $\ell \, \eqdef \, \displaystyle \frac{L}{2} \,$. According to the mean value theorem, this temperature is not necessarily in the middle of the wall.  
The formulation of the model is: 
\begin{align*}
\ell \cdot  c \cdot \od{T_{\,2}}{t} \egal k \cdot \Biggl(\, \pd{T}{x}\,\biggl|_{\,x \egal \frac{\,3 \cdot L}{4}} \moins \pd{T}{x}\,\biggl|_{\,x \egal \frac{L}{4}} \,\Biggr) \,.
\end{align*}
Using \textsc{Fourier}'s law to express the flux at the boundary of the cell, one obtains: 
\begin{align}
\label{eq:RC_model}
\ell^{\,2} \cdot  c \cdot \od{T_{\,2}}{t} \egal k \cdot \biggl(\, T_{\,3} \moins 2 \cdot T_{\,2} \plus T_{\,1} \,\biggr) \,.
\end{align}
It can be remarked that by integration, the partial differential heat diffusion equation is transformed into a simple ordinary equation in the RC lumped approach. Using a first order in space central discretisation for the boundary conditions given by equation~\eqref{eq:BC}, we obtain:
\begin{subequations}
\label{eq:RC_model_BC}
\begin{align}
\frac{k}{\ell} \cdot \Bigl(\, T_{\,2} \moins T_{\,1} \,\Bigr)&  \egal h_{\,L} \cdot \Bigl(\, T_{\,1} \moins T_{\,\infty\,,\,L} \,\Bigr)\,, \label{eq:RC_model_BC_L} \\[4pt]
\frac{k}{\ell} \cdot \Bigl(\, T_{\,3} \moins T_{\,2} \,\Bigr) & \egal - \, h_{\,R} \cdot \Bigl(\, T_{\,3} \moins T_{\,\infty\,,\,R} \,\Bigr) \,. \label{eq:RC_model_BC_R}
\end{align}
\end{subequations}
In the literature, this model is also referenced as R$2$C approach due to the straightforward electric analogy. 

\begin{figure}
\centering
\includegraphics[width=.45\textwidth]{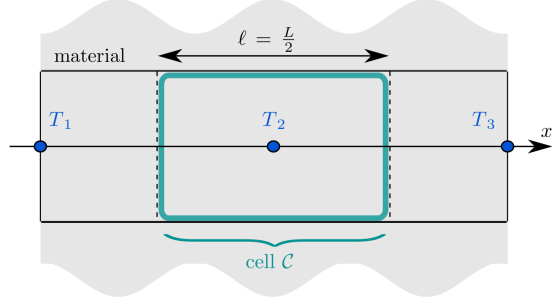}
\caption{Illustration of the three temperature defined in the lumped RC model.}
\label{fig:cell_RC}
\end{figure}

\subsubsection{Numerical Model}

The algorithm of the lumped RC model is developped in the \texttt{Matlab}\texttrademark ~environment. As performed in many works of literature \cite{Fraisse_2002,Kampf_2007,Naveros_2015}, the numerical model is not written in a dimensionless form. The physical dimensional variables are used. As for the previous numerical model, the discrete values of $T\,(\,t\,)$ are denoted using $T_{\,j}^{\,n} \, \eqdef \, T_{\,j}\,(\,t^{\,n}\,) \,, j \, \in \, \bigl\{\,1 \,,\, 2 \,,\, 3 \,\bigr\} \,$. The time discretisation parameter is designated by $\Delta t \,$. The ordinary differential Equation~\eqref{eq:RC_model} is approximated using an explicit \textsc{Euler} time scheme:
\begin{align*}
T_{\,2}^{\,n+1} \egal T_{\,2}^{\,n} \plus  \frac{k}{c \cdot \ell^{\,2}} \cdot \Delta t \cdot \biggl(\, T_{\,3}^{\,n}  \moins 2 \cdot T_{\,2 }^{\,n}  \plus T_{\,1}^{\,n} \,\biggr) \,.
\end{align*}
This choice of time discretisation scheme imposes a stability condition and the choice of the time step $\Delta t \,$:
\begin{align*}
\Delta t \, \leqslant \, \frac{1}{2} \cdot \frac{\ell^{\,2} \cdot c}{k} \,.
\end{align*}
To compute the temperatures $T_{\,1}$ and $T_{\,3} $ at the boundaries, the following equations are used:
\begin{align*}
\Bigl(\, h_{\,L} \plus \frac{k}{\ell}\,\Bigr) \cdot T_{\,1}^{\,n+1} & \moins \frac{k}{\ell} \cdot T_{\,2}^{\,n+1} \egal h_{\,L} \cdot T_{\,\infty\,,\, L} \,, \\[4pt]
\Bigl(\, h_{\,R} \plus \frac{k}{\ell}\,\Bigr) \cdot T_{\,3}^{\,n+1} & \moins \frac{k}{\ell} \cdot T_{\,2}^{\,n+1} \egal h_{\,R} \cdot T_{\,\infty\,,\, R} \,.
\end{align*}
In the end, the numerical model is written in a matrix form to compute the vector 
\begin{align*}
\bd{T}^{\,n+1} \, \eqdef \, \bigl[\, T_{\,1}^{\,n+1} \,,\, T_{\,2}^{\,n+1} \,,\, T_{\,3}^{\,n+1} \,\bigr]^{\,T} \,,
\end{align*}
using:
\begin{align}
\label{eq:RC_NM}
\bd{A} \cdot \bd{T}^{\,n+1} \egal \bd{B} \cdot  \bd{T}^{\,n} \plus \bd{Q} \,,
\end{align}
where
\begin{align*}
\bd{A} & \eqdef \, 
\begin{bmatrix}
\displaystyle h_{\,L} + \frac{k}{e} 
& \displaystyle - \frac{k}{e} & 0 \\[4pt]
0 & 1 & 0 \\[4pt] 
0 
& \displaystyle - \frac{k}{e} 
& \displaystyle h_{\,R} + \frac{k}{e}
\end{bmatrix} \,, 
&& \bd{B} \eqdef \, 
\begin{bmatrix}
0 & 0 & 0 \\[4pt]
\displaystyle \frac{k \cdot \Delta t}{c \cdot e^{\,2}}  
& \displaystyle 1 - 2 \cdot \frac{k \cdot \Delta t }{c \cdot e^{\,2}}
& \displaystyle \frac{k \cdot \Delta t}{c \cdot e^{\,2}}  \\[4pt] 
0 & 0 & 0
\end{bmatrix} \,, 
&& \bd{Q} \eqdef \, 
\begin{bmatrix}
h_{\,L} \cdot T_{\,\infty\,,\,L}^{\,n+1} \\
0 \\
h_{\,R} \cdot T_{\,\infty\,,\,R}^{\,n+1}
\end{bmatrix} \,.
\end{align*}
It can be noticed that the lumped RC model only requires the solution of three equations, while the complete model based on the heat diffusion equation needs $N_{\,x} \,$ calculations.

\section{Evaluating the reliability for the estimation of unknown parameter}
\label{sec:evaluating_reliability}

The procedure to evaluate the reliability of the mathematical model for the estimation of unknown parameters is illustrated in Figure~\ref{fig:procedure}. It is divided into two steps. The first one aims at generating experimental observations using a numerical model different from the DF or the RC ones. A total of $N_{\,s}$ samples of observations are generated \emph{in silico}. Then, for each sample, the parameter estimation problem is solved using the direct model based on the DF or the RC approaches. The suitable metrics to evaluate the reliability of each direct model for the estimation of unknown parameters are detailed in Section~\ref{sec:metrics}.

Before detailing the two steps, some preliminary definitions are provided. First, the singleton set $\Omega_{\,p}$ of the unique unknown parameter $p$ is defined by:
\begin{align*}
\Omega_{\,p} & \egal  \bigl\{\, p \,\bigr\} \,, \qquad p \, \in \, \mathbb{R}  \,.
\end{align*}
The distinction is done between the real parameter $p_{\,\real}$ used to generate the experimental observations. The identification problem aims to determine an estimate of parameter $ p_{\,\circ} \,$. If the model is reliable, it is expected that the difference between the real and estimated parameter to be as small as possible. An initial guess on the unknown parameter is required in the parameter estimation procedure, denoted $p_{\,\apr}\,$ using the subscript $_{\,\apr}$ for the \emph{a priori} estimation. 

To prove the theoretical identifiability of the unknown parameter $p$ the Structurally Globally Identifiable (SGI) property \cite{Walter_1982} is recalled. A parameter $p$ defined in the model $u$ is SGI if the following condition is satisfied:
\begin{align*}
\forall \, \bigl(\, p\,,\, p^{\,\prime} \,\bigr) \, \in \, \Omega_{\,p} \, \times \, \Omega_{\,p} \,, 
\qquad u\,\bigl(\,p\,\bigr) \equiv u\,\bigl(\,p^{\,\prime} \,\bigr)
\, \Longrightarrow \, 
p \equiv p^{\,\prime} \,.
\end{align*}
In other words, the mapping of $u$ is injective with respect to the parameter $p\,$.

We also define the ordered set of observation times:
\begin{align*}
\Omega_{\,t} & \egal \bigl(\, t_{\,1}^{\,\star} \,,\, \ldots \,,\, t_{\,k}^{\,\star} \,\bigr) \, \subset \, \bigl[\, 0 \,,\, t_{\,\fin}^{\,\star} \,\bigr]^{\,K}
 \,, && k \egal 1 \,,\,\ldots \,,\, K \,.
\end{align*}
From a practical point of view, the set $\Omega_{\,t}$ corresponds to the time grid where the experimental measurements are acquired. The point of coordinate $x_{\,\obs}^{\,\star} \,\in\,  \bigl[\,0\,,\,1\,\bigr]$ corresponds to the place where the sensor is located in the material to acquire the observation. The singleton set of sensor position is denoted by:
\begin{align*}
\Omega_{\,x} & \egal \bigl\{\, x_{\,obs}^{\,\star}  \,\bigr\} \,\subset \, \bigl[\,0\,,\,1 \,\bigr] \,, && x_{\,obs}^{\,\star}  \, \in\, \bigl[\,0\,,\,1\,\bigr] \,.
\end{align*}
In this work, only one sensor is settled so $\Omega_{\,x} \,\subset \, \mathbb{R}\,$.


\begin{figure}
\centering
\includegraphics[width=.65\textwidth]{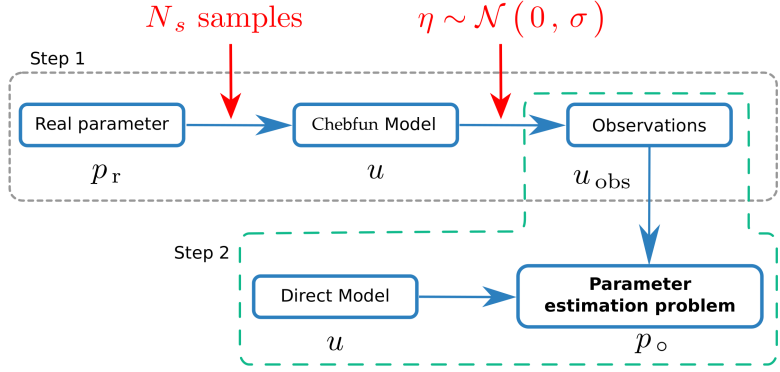}
\caption{Illustration of the procedure to evaluate the reliability of the mathematical model for the estimation of unknown parameters.}
\label{fig:procedure}
\end{figure}

\subsection{Step 1: generation of experimental observations}

The observations are generated with a numerical model based on pseudo--spectral approach using the \texttt{Matlab\texttrademark} open source toolbox \texttt{Chebfun} \cite{Chebfun_2014}. The model employs the function \texttt{pde23t} of \texttt{Chebfun} to compute a numerical reference solution $u_{\,\reff}$ of the partial derivative equation~\eqref{eq:heat_diffusion_dimless} based on the \textsc{Chebyshev} polynomials representation in space. The reference solution is computed using the real value of the parameter $p_{\,\real}\,$. It is directly obtained for the sensor location $x_{\,\obs}$ and the observation time set $\Omega_{\,t}\,$. Then, to obtain the $K$ observations $u_{\,\obs}\,$, a noise is added to simulate the experimental uncertainties due to the sensor design and location:
\begin{align*}
u_{\,\obs} \,:\, 
\Omega_{\,x} \,\times \, \Omega_{\,t} \,\times \, \Omega_{\,p} 
& \,\longrightarrow \, 
\Omega_{\,\obs} \,, \\[4pt] 
\Bigl(\, x_{\,\obs}^{\,\star}  \,,\, t_{\,k}^{\,\star}  \,,\, p_{\,\real} \,\Bigr) 
& \,\longmapsto \, 
u_{\,\reff} \, \bigl(\, x_{\,\obs}^{\,\star}  \,,\, t_{\,k}^{\,\star}   \,\bigr) \plus \eta \, \bigl(\, \omega \, \bigr) \,,
\end{align*}
where $\eta \,\sim\, \mathcal{N}\,\bigl(\, 0 \,,\, \sigma_{\,\obs} \,\bigr)$ is a noise following a \textsc{Gau}\ss ~normal distribution with  zero mean and standard deviation $\sigma_{\,\obs}\,$. The co-domain of $u_{\,\obs}$ verifies $\Omega_{\,\obs} \, \subset \, \mathbb{R}^{\,K} \,$.

\subsection{Step 2: solving the parameter estimation}

The parameter estimation problem is solved using the (numerically generated) experimental observations and the solution of the direct model $u_{\,\dir}\,$. The latter is defined by:
\begin{align*}
u_{\,\dir} \,:\, 
\Omega_{\,x} \,\times \, \Omega_{\,t} \,\times \, \Omega_{\,p} 
& \,\longrightarrow \, 
\Omega_{\,\dir} \,, \\[4pt] 
\Bigl(\, x_{\,\obs}^{\,\star}  \,,\, t_{\,k}^{\,\star}  \,,\, p \,\Bigr) 
& \,\longmapsto \, 
u_{\,\dir}\, \Bigl(\, x_{\,\obs}^{\,\star}  \,,\, t_{\,k}^{\,\star}  \,,\, p \,\Bigr)  \,.
\end{align*}
It is computed using the DF model (described in Section~\ref{sec:DF_model}) or the RC one (described in Section~\ref{sec:RC_model}). The domain and the co-domain of $u_{\,\dir}$ verifies $\mathrm{dom} \ u_{\,\obs} \, \equiv \, \mathrm{dom} \ u_{\,\dir}$ and $\Omega_{\,\dir} \, \subset \, \mathbb{R}^{\,K} \,$, respectively.
The identification problem aims at computing the estimated parameter $ p_{\,\circ} \,$ verifying:
\begin{align}
\label{eq:estimated_parameter_def}
p_{\,\circ} \, \eqdef \, \arg \; \min_{\,\Omega_{\,p}} \;  J \,,
\end{align}
where $J$ is the so-called cost function defined by the least square estimator:
\begin{align}
\label{eq:cost_function}
J \;:\, \Omega_{\,\dir} \, \times \, \Omega_{\,\obs} \,&   \longrightarrow \, \mathbb{R}_{\,\geqslant \,0} \,,\\
 \bigl(\, u_{\,\dir} \,,\, u_{\,\obs} \,\bigr) & \,\longmapsto \, \Bigl| \Bigl|\, u_{\,\dir}  \moins u_{\,\obs} \,\Bigr| \Bigr|_{\,2} \,, \nonumber
\end{align}
where $\Bigl| \Bigl|\, \bullet \,\Bigr| \Bigr|_{\,2}$ is the least square estimator $\mathcal{L}_{\,2}$ defined by:
\begin{align*}
\Bigl| \Bigl|\, \bullet \,\Bigr| \Bigr|_{\,2} \,:\; & y 
\, \longmapsto \, 
\frac{1}{t_{\,\fin}} \cdot \int_{\,\Omega_{\,t}} \; \Bigl(\, y\,\bigl(\,t \,\bigr) \,\Bigr)^2 \, \mathrm{d}t\,.
\end{align*}
The dependency of the cost function $J$ on the unknown parameter $p$ can be understood by the diagram of mapping illustrated in Figure~\ref{fig:mapping_J}. The minimization of the cost function~\eqref{eq:estimated_parameter_def} is realized using the \textsc{Gau}\ss ~algorithm \cite{Beck_1977,Ozisik_2000,Kabanikhin_2008b}. Specifically, the necessary condition for the minimum of $J$ is:
\begin{align*}
\pd{J}{p} \egal \textbf{0} \,,
\end{align*}
which is equivalent to
\begin{align}
\label{eq:diffJ_diffp}
\frac{1}{t_{\,\fin}} \cdot  \int_{\,\Omega_{\,t}} \; 2 \cdot \pd{u_{\,\dir}}{p} \cdot \Bigl(\, u_{\,\dir} \,(\,p\,)  \moins u_{\,\obs} \, \Bigr) \, \mathrm{d}t  \egal \textbf{0} \,.
\end{align}
Assuming we have a candidate for the unknown parameter $p_{\,m} \,$, the \textsc{Taylor} expansion gives:
\begin{align*}
u_{\,\dir} \,(\,p\,) \egal u_{\,\dir}\,(\,p_{\,m}\,) \plus \pd{u}{p}\,\biggl|_{\,p \egal p_{\,m}} \cdot \, \Bigl(\, p \moins p_{\,m} \,\Bigr) \plus \mathcal{O}\,\biggl(\, \Bigl(\,p \moins p_{\,m} \,\Bigr)^{\,2} \,\biggr) \,.
\end{align*}
So, Equation~\eqref{eq:diffJ_diffp} after truncation becomes:
\begin{align}
\label{eq:Gauss_linearization}
\int_{\,\Omega_{\,t}} \; 2 \cdot \pd{u_{\,\dir}}{p} \cdot 
\Biggl(\, u_{\,\dir}\,(\,p_{\,m}\,) \plus \pd{u_{\,\dir}}{p}\,\biggl|_{\,p \egal p_{\,m}} \cdot \, \Bigl(\, p \moins p_{\,m} \,\Bigr)
\moins u_{\,\obs} \, \Biggr) \, \mathrm{d}t  \egal \textbf{0} \,.
\end{align}
Equation~\eqref{eq:Gauss_linearization} provides the \textsc{Gau}\ss ~linearization to compute a candidate $p$ better than $p_{\,m}$ to minimize the cost function $J\,$. To indicate the iterative procedure, the notation is slightly changed and parameter $p \, \leftarrow \, p_{\,m+1} \,$. Thus, the candidate $p_{\,m+1}$ is computed by forcing equation~\eqref{eq:Gauss_linearization} to vanish:
\begin{align*}
p_{\,m+1} \egal p_{\,m} \plus \frac{u_{\,\obs} \moins u_{\,\dir}}{\displaystyle \pd{u_{\,\dir}}{p}\,\biggl|_{\,p \egal p_{\,m}}} \,.
\end{align*}
The computation of the candidate $p_{\,m+1}$ requires the knowledge of the sensitivity function $\displaystyle \pd{u_{\,\dir}}{p} \,$. For this, for each direct model (DF or RC ones), the sensitivity equation is obtained by differentiating the main equation with respect to the unknown parameter $p \,$. All sensitivity equations for the two direct models and some comments on their resolution are provided in Appendix~\ref{app:sensitivity_equations}. The iterative procedure is implemented starting from the initial guess $p_{\,\apr}\,$. Two stopping criteria $\gamma_{\,1}$ and $\gamma_{\,2}$ are defined on the magnitude of changes of the cost function and unknown parameter:
\begin{align*}
\gamma_{\,1} \,\Bigl(\,p_{\,m} \,,\, p_{\,m+1} \,\Bigr) & \eqdef \, \frac{\Bigl|\Bigl|\, p_{\,m+1} \moins p_{\,m} \, \Bigr|\Bigr|_{\,2} }{\Bigl|\Bigl|\, p_{\,m} \, \Bigr|\Bigr|_{\,2}} \,, \\[4pt]
\gamma_{\,2} \,\Bigl(\,p_{\,m} \,,\, p_{\,m+1} \,\Bigr) & \eqdef \, 
\frac{ \Bigl|\Bigl|\, u_{\,\dir} \,(\,p_{\,m+1}\,)  \moins u_{\,\obs} \, \Bigr|\Bigr|_{\,2} 
\moins \Bigl|\Bigl|\, u_{\,\dir} \,(\,p_{\,m}\,)  \moins u_{\,\obs} \, \Bigr|\Bigr|_{\,2} }{ \Bigl|\Bigl|\, u_{\,\dir} \,(\,p_{\,m}\,)  \moins u_{\,\obs} \, \Bigr|\Bigr|_{\,2} } \,.
\end{align*}
The algorithm stop when the following conditions are reached:
\begin{align*}
\Bigl(\, \gamma_{\,1} & \leqslant \, \eta_{\,1} \,\Bigr) 
&& \& 
&& \Bigl(\, \gamma_{\,2} \leqslant \, \eta_{\,2} \,\Bigr)  \,.
\end{align*}
where $\eta_{\,1}$ and $\eta_{\,2}$ are small positive values set in this work to $\eta_{\,1} \egal \eta_{\,2} \egal 10^{\,-6} \,$. 

\begin{figure}
\centering
\includegraphics[width=.65\textwidth]{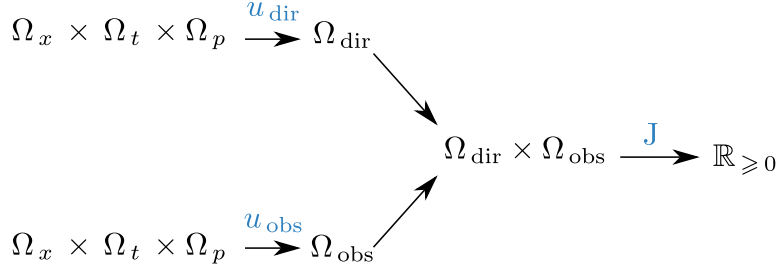}
\caption{Diagram of mapping involving the cost function $J \,$.}
\label{fig:mapping_J}
\end{figure}

\subsection{Metrics for the evaluation of the reliability}
\label{sec:metrics}

Several metrics are defined for the evaluation of the reliability of the two models to estimate one unknown parameter among $\bigl\{\, c \,,\, k \,,\, h_{\,L} \,\bigr\}\,$. Since the estimation of the unknown parameter is realized for $N_{\,s}$ samples of observations, it is possible to compute classical statistical metrics. The expectation $\mathsf{E}\,\bigl[\, - \, \bigr]$ and the standard deviation  $\sigma\,\bigl[\, - \, \bigr]$ of the random variable $y$ are defined by:
\begin{align*}
\mathsf{E} \, \bigl(\,y \,\bigr) & \eqdef \, \frac{1}{N_{\,s}} \, \sum_{s\egal 1}^{N_{\,s}} \ y_{\,s} \,, 
&& \sigma^{\,2} \, \bigl(\,y \,\bigr) \, \eqdef \, \mathsf{E} \,\biggl(\, \Bigl(\,y \moins \mathsf{E} \, \bigl(\, y \,\bigr) \,\Bigr)^{\,2} \,\biggr) \,. 
\end{align*}
These metrics can be applied to the ratio $\displaystyle \frac{p_{\,\circ}}{p_{\,\real}}$ between the estimated and real parameters and to the number of iterations $N_{\,m}$ or the computational (CPU) time $t_{\,\mathrm{CPU}}$ required for the algorithm to estimate the parameter. The latter is measured using \texttt{Matlab\texttrademark}  environment on a computer equipped with Intel i$7-6820$HQ CPU, $2.70$GHz and $32\,$GB of RAM.

\section{Case studies}
\label{sec:case_studies}

The reliability of the mathematical model is evaluated to estimate one unknown parameter among $\bigl\{\, c \,,\, k \,,\, h_{\,L} \,\bigr\}\,$, two others being fixed. Five types of usual building materials are considered as summarized in Table~\ref{tab:material}. The thickness of the material is $L \egal 22 \ \mathsf{cm}\,$. The initial condition is $T_{\,0} \egal 20 \ \mathsf{^{\,\circ}C}\,$. At both boundary conditions, the ambient temperatures follow the sinusoidal variations:
\begin{align*}
T_{\,\infty\,,\,L} \, \bigl(\,t\,\bigr) 
& \egal T_{\,0} \plus 10 \cdot \sin \biggl(\, \frac{2 \cdot \pi}{20 \cdot 3600} \cdot t \,\biggr) 
\plus 10  \cdot \sin \biggl(\, \frac{2 \cdot \pi}{2 \cdot 3600} \cdot t \,\biggr) \,, \\[4pt]
T_{\,\infty\,,\,R} \, \bigl(\,t\,\bigr) 
& \egal T_{\,0} \plus 20 \cdot \tanh \biggl(\, \frac{1}{4 \cdot 3600} \cdot t \,\biggr) 
\moins 10  \cdot \sin \biggl(\, \frac{2 \cdot \pi}{4 \cdot 3600} \cdot t \,\biggr) \,,
\end{align*}
which are illustrated in Figure~\ref{fig:BC_ft}. The heat transfer coefficient at the right boundary equals to $h_{\,R} \egal 5 \ \mathsf{W\,/\,(m^{\,2} \cdot K)}\,$. 

For each case, $N_{\,s} \egal 10^{\,4} $ samples of observations are generated with a noise of standard deviation $\sigma_{\,\obs} \egal 0.2 \ \mathsf{^{\,\circ}C}\,$, corresponding to usual uncertainty of temperature measurement. The point of observation is the middle of the wall $x_{\,\obs} \egal 11 \ \mathsf{cm}\,$. The time grid of each sample of observations is set as $t_{\,k} \egal k \cdot 360 \ \mathsf{s}\,, k \, \in \, \bigl\{\,0 \,,\, \ldots \,,\, 200 \,\bigr\} \, \subset \, \mathbb{N}_{\,0} \,$. Thus, each sample includes $K \egal 201$ observations obtained with a time step of $360 \ \mathsf{s}\,$.

The discretisation parameter are set to $\Delta t \egal 3.6 \ \mathsf{s}$ and $\Delta x \egal 2.2 \ \mathsf{mm}$ for the \DF ~model. For the RC lumped model, the time discretisation parameter is also $\Delta t \egal 3.6 \ \mathsf{s}\,$.

\begin{figure}
\centering
\includegraphics[width=.45\textwidth]{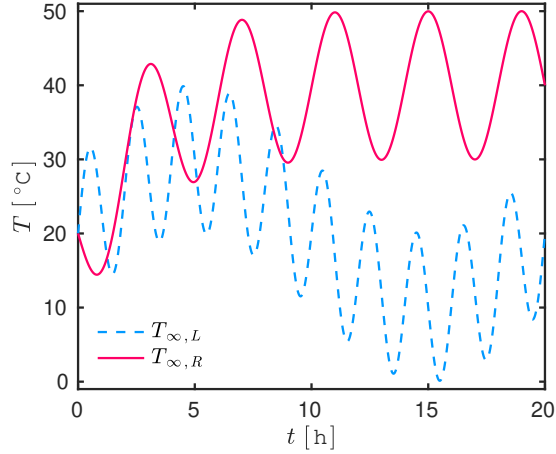}
\caption{Time variation of the boundary conditions.}
\label{fig:BC_ft}
\end{figure}

\begin{table}
\centering
\caption{Real properties of the material considered for the case studies.}
\label{tab:material}
\setlength{\extrarowheight}{.5em}
\begin{tabular}[l]{@{} c c c c}
\hline
\hline
\textit{Identification}
& \textit{Volumetric heat capacity }
& \textit{Thermal conductivity} 
& \textit{Material} \\
\textit{$N^{\,\circ}$}
& \textit{$c \ \unitt{MJ}{kg \cdot K}$}
& \textit{$k \ \unitt{W}{m \cdot K}$} 
& \textit{type} \\
1 
& $5 \e{-2}$
& $5 \e{-2}$
& \textbf{insulation} \\
2 
& $5 \e{-1}$
& $5 \e{-1}$
& \textbf{wood} \\
3 
& $1.5$
& $1$
& \textbf{brick} \\
4 
& $2.0$
& $1.5$
& \textbf{concrete} \\
5 
& $2.5$
& $2.5$
& \textbf{stone} \\
\hline
\hline
\end{tabular}
\end{table}

\subsection{Estimation of the volumetric heat capacity}

The purpose is to estimate the thermal capacity $c\,$ for the five types of materials. 
Before generating the experimental observations and performing the estimation, it is necessary to prove the identifiability of the parameter $c$ for both models using the SGI property. First, the demonstration is realized for the DF model. \\~
\begin{proposition}
The parameter $c$ is identifiable in Equation~\eqref{eq:heat1d}.
\end{proposition}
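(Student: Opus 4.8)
The plan is to verify the SGI criterion recalled above by a direct argument. Fix the thermal conductivity $k$ and the heat transfer coefficients, and suppose that two admissible values $c$ and $c^{\,\prime}$ of the volumetric heat capacity yield the same temperature field; write $T \eqdef T\,(\,c\,) \equiv T\,(\,c^{\,\prime}\,)$ for this common solution of~\eqref{eq:heat1d}--\eqref{eq:BC} with the prescribed boundary data and the uniform initial datum $T_{\,0}\,$. By construction $T$ satisfies simultaneously $c \cdot \pd{T}{t} \egal k \cdot \pd{^{\,2} T}{x^{\,2}}$ and $c^{\,\prime} \cdot \pd{T}{t} \egal k \cdot \pd{^{\,2} T}{x^{\,2}}$ on $\bigl[\,0\,,\,L\,\bigr] \times \bigl[\,0\,,\,t_{\,\fin}\,\bigr]\,$.

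Subtracting these two identities gives $\bigl(\, c \moins c^{\,\prime} \,\bigr) \cdot \pd{T}{t} \egal 0$ on the whole space--time cylinder. Since $c \moins c^{\,\prime}$ is a constant, it suffices to exhibit a single point $(\,x\,,\,t\,)$ at which $\pd{T}{t} \neq 0\,$: the factor $c \moins c^{\,\prime}$ must then vanish, hence $c \egal c^{\,\prime}\,$, which is precisely the injectivity of $c \longmapsto T\,(\,c\,)$ demanded by the SGI property. Equivalently, wherever $\pd{T}{t} \neq 0$ the governing equation lets one solve for $c$ in terms of the field (with $k$ fixed), so the parameter is uniquely determined by $T\,$.

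The only point that requires care is therefore showing $\pd{T}{t} \not\equiv 0\,$, and I would argue it by contradiction. If $\pd{T}{t} \equiv 0\,$, then~\eqref{eq:heat1d} forces $\pd{^{\,2}T}{x^{\,2}} \egal 0\,$, so $T$ is affine in $x$ and, being also independent of $t\,$, has the form $T\,(\,x\,,\,t\,) \egal a \cdot x \plus b$ with $a\,,\,b$ constant; the uniform initial condition $T\,(\,x\,,\,0\,) \egal T_{\,0}$ then gives $a \egal 0$ and $b \egal T_{\,0}\,$, i.e. $T \equiv T_{\,0}\,$. Substituting this into the \textsc{Robin} condition~\eqref{eq:BC_L} yields $0 \egal h_{\,L} \cdot \bigl(\, T_{\,0} \moins T_{\,\infty\,,\,L}\,(\,t\,) \,\bigr)$ for all $t \geqslant 0\,$, which is impossible since the ambient temperature $T_{\,\infty\,,\,L}$ is a non-constant function of time (as in the case studies considered). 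Hence $\pd{T}{t}$ is not identically zero and the proposition follows. I expect everything here to be routine except this last contradiction step, whose only content is that the prescribed boundary forcing is genuinely transient, so the solution cannot be stationary.
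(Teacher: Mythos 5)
Your proof follows the same route as the paper's: write the two instances of the heat equation satisfied by the common observable, subtract them to get $\bigl(c - c'\bigr)\cdot \partial T/\partial t = 0$, and conclude $c = c'$. The only difference is that you explicitly justify the step the paper leaves implicit --- that $\partial T/\partial t$ cannot vanish identically, which you obtain by contradiction from the non-constant ambient forcing in the Robin condition --- so your argument is a slightly more complete version of the same proof rather than a different one.
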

\begin{proof}
We assume an observable $T\,\bigl(\,-\,,\,-\,\bigr)$ verifies the model:
\begin{align}
\label{eq:SGI_heat_eq_c}
c \cdot \pd{T}{t} \egal k \cdot \pd{^{\,2} T}{x^{\,2}} \,.
\end{align}
Another observable, denoted with a superscript $^{\,\prime}\,$, obtained with another parameter $c^{\,\prime}$ holds:
\begin{align}
\label{eq:SGI_heat_eq_c_prime}
c^{\,\prime} \cdot \pd{T^{\,\prime}}{t} \egal k \cdot \pd{^{\,2} T^{\,\prime}}{x^{\,2}} \,.
\end{align}
If $T \, \equiv \, T^{\,\prime}$ then $\displaystyle \pd{T}{t} \, \equiv \, \pd{T^{\,\prime}}{t}$ and $\displaystyle  \pd{^{\,2} T}{x^{\,2}} \, \equiv \, \pd{^{\,2} T^{\,\prime}}{x^{\,2}} \,$. Thus, from equations~\eqref{eq:SGI_heat_eq_c} and \eqref{eq:SGI_heat_eq_c_prime}, we obtain:
\begin{align*}
\Bigl(\, c \moins c^{\,\prime} \,\Bigr) \cdot \pd{T}{t} \egal 0
\end{align*}
so $c \, \equiv \, c^{\,\prime}$ and parameter $c$ is SGI. 
\end{proof}
Now, the identifiability is proven for the RC model. \\~
\begin{proposition}
The parameter $c$ is identifiable in Equation~\eqref{eq:RC_model}.
\end{proposition}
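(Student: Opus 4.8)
The plan is to mimic almost verbatim the argument used for the DF model in the previous proposition, exploiting the fact that \eqref{eq:RC_model} depends on $c$ only through a single linear term. First I would assume that an observable $T_{\,2}\,(\,-\,)$ satisfies \eqref{eq:RC_model} with parameter $c$, and that a second observable $T_{\,2}^{\,\prime}\,(\,-\,)$ satisfies the same equation with parameter $c^{\,\prime}$, both sharing the fixed values of $k$ and $\ell$ and the same prescribed ambient temperatures $T_{\,\infty\,,\,L}$ and $T_{\,\infty\,,\,R}$.

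The next step is to observe that in the lumped formulation the boundary temperatures $T_{\,1}$ and $T_{\,3}$ are \emph{not} independent unknowns: the relations \eqref{eq:RC_model_BC} express them as affine functions of $T_{\,2}$ and of the given ambient temperatures. Hence $T_{\,2} \, \equiv \, T_{\,2}^{\,\prime}$ forces $T_{\,1} \, \equiv \, T_{\,1}^{\,\prime}$ and $T_{\,3} \, \equiv \, T_{\,3}^{\,\prime}$, so the right-hand sides of the two copies of \eqref{eq:RC_model} coincide. Since $T_{\,2} \, \equiv \, T_{\,2}^{\,\prime}$ also gives $\od{T_{\,2}}{t} \, \equiv \, \od{T_{\,2}^{\,\prime}}{t}$, subtracting the two ordinary differential equations yields $\ell^{\,2} \cdot \bigl(\, c \moins c^{\,\prime} \,\bigr) \cdot \od{T_{\,2}}{t} \egal 0$, whence $c \, \equiv \, c^{\,\prime}$ as soon as $\od{T_{\,2}}{t} \not\equiv 0$, which proves that $c$ is SGI.

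The only point requiring a word of justification — and the only place where the argument differs from the heat-equation case — is ruling out $\od{T_{\,2}}{t} \equiv 0$. I would argue that a permanently vanishing time derivative would, through \eqref{eq:RC_model}, impose $T_{\,3} \moins 2 \cdot T_{\,2} \plus T_{\,1} \egal 0$ for all $t \, \in \, \bigl[\, 0 \,,\, t_{\,\fin} \,\bigr]$, i.e. a steady state, which is incompatible with the time-dependent boundary forcing considered in the case studies; therefore $T_{\,2}$ is genuinely transient and $\od{T_{\,2}}{t}$ is not identically zero. I do not anticipate any real obstacle here: the reasoning is purely algebraic, and the single subtlety one must not overlook is precisely that $T_{\,1}$ and $T_{\,3}$ are slaved to $T_{\,2}$ via \eqref{eq:RC_model_BC}, so that equating the observables genuinely equates the full right-hand sides.
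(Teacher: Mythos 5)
Your argument is essentially the same as the paper's: subtract the two copies of Equation~\eqref{eq:RC_model} after noting $\od{T_{\,2}}{t} \equiv \od{T_{\,2}^{\,\prime}}{t}$, obtain $\bigl(\, c \moins c^{\,\prime} \,\bigr) \cdot \od{T_{\,2}}{t} \egal 0$, and conclude $c \equiv c^{\,\prime}$. The two points you add — that \eqref{eq:RC_model_BC} slaves $T_{\,1}$ and $T_{\,3}$ to $T_{\,2}$ so the right-hand sides genuinely coincide, and that $\od{T_{\,2}}{t} \not\equiv 0$ under the time-dependent forcing — are left implicit in the paper, so your version is if anything slightly more complete.
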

\begin{proof}
We assume an observable $T\,\bigl(\,-\,\bigr)$ obtained from the RC model:
\begin{align}
\label{eq:SGI_RC_c_model}
e^{\,2} \cdot  c \, \od{T_{\,2}}{t} \egal k \cdot \biggl(\, T_{\,3} \moins 2 \cdot T_{\,2} \plus T_{\,1} \,\biggr) \,.
\end{align}
Another observable, denoted with a superscript $^{\,\prime}\,$, obtained with another parameter $c^{\,\prime}\,$:
\begin{align}
\label{eq:SGI_RC_c_model_prime}
e^{\,2} \cdot  c^{\,\prime} \, \od{T_{\,2}^{\,\prime}}{t} \egal k \cdot \biggl(\, T_{\,3}^{\,\prime} \moins 2 \cdot T_{\,2}^{\,\prime} \plus T_{\,1}^{\,\prime} \,\biggr) \,.
\end{align}
If $T \, \equiv \, T^{\,\prime}$ then $\displaystyle \od{T}{t} \, \equiv \, \od{T^{\,\prime}}{t}\,$. Thus, from Equations~\eqref{eq:SGI_RC_c_model} and \eqref{eq:SGI_RC_c_model_prime}, one obtains:
\begin{align*}
\Bigl(\, c \moins c^{\,\prime} \,\Bigr) \cdot \od{T_{\,2}}{t} \egal 0
\end{align*}
so $c \, \equiv \, c^{\,\prime}$ and parameter $c$ is SGI in the RC model. 
\end{proof}

The experimental observations are generated using the real heat capacity $c_{\,\real}$ given in Table~\ref{tab:material}. The thermal conductivity is a known parameter given for each material in the same Table. The heat transfer coefficient at the left boundary conditions is set to $h_{\,L} \egal 15 \ \mathsf{W\,/\,(m^{\,2} \cdot K)}\,$. For the solution of the parameter estimation problem, the initial guess of $c$ is fixed in the algorithm as $c_{\,\apr} \egal 0.1 \cdot c_{\,\real} \,$.

The expectation of the estimated parameter $c_{\,\circ}$ using both mathematical models DF and RC is compared with the real parameter $c_{\,\real}$ in Figure~\ref{fig:c_P_frhoc}. More detailed results are provided in Table~\ref{tab:c_results_stats}. The DF model allows to estimate accurately the unknown parameter $c$. For the five materials, the expectation of the ratio between the estimated and real parameter approximately equal to $1\,$. For the RC model, the estimation lacks of accuracy for all materials. There is slight decrease of the expectation of the estimated parameter with the increase of volumetric heat capacity $c_{\,\real}\,$. For the material $N^{\,\circ}\,5\,$, the RC lumped model estimates a parameter with almost $50\%$ of the relative error. As reported in Table~\ref{tab:c_results_stats}, for both models the standard deviation of the estimated parameter is small. 
The number of iterations required for the estimation of the parameter are illustrated in Figure~\ref{fig:c_Nk_frhoc} with more detailed results in Table~\ref{tab:c_results_stats}. Mainly, the DF model requires fewer iterations to estimate the parameter than the RC one. The number of iterations is eight times more in average for the RC model, while it seems to decrease for the DF model together with the heat capacity. Figure~\ref{fig:c_cpu_frhoc} gives the computational time needed by the algorithm to converge to the estimated parameter. The DF model has a higher computational cost, around $2.5 \ \mathsf{s}$ for one estimation. Even if the algorithm based on the DF approach needs fewer iterations, these computational effort differences are due to the construction of each numerical model. Indeed, at each time iteration, $N_{\,x} \egal 100$ equations are computed for DF approach while only $3$ for the RC model. It should be recalled that the same time discretisation parameters were used for both models. 

An insight of the results for the materials $1$ and $3$ is illustrated in Figures~\ref{fig:c_T_all_ft_mat1} and \ref{fig:c_T_all_ft_mat3}. The time evolution of the temperature expectation computed with the estimated parameter for both models is compared to the experimental observations. The RC model lacks of accuracy to represent the physical phenomena. On the other hand, there is a satisfactory agreement between the predictions of the DF model and the experimental observations. Figures~\ref{fig:c_crit1_fk} and \ref{fig:c_crit2_fk} show the convergence of the algorithm relatively to the number of iterations for both models. It can be remarked that the DF model convergence is faster than for the RC one. These results may be due to lack of accuracy in the computation of the sensitivity equations by the RC model.

\begin{figure}
\centering
\includegraphics[width=.45\textwidth]{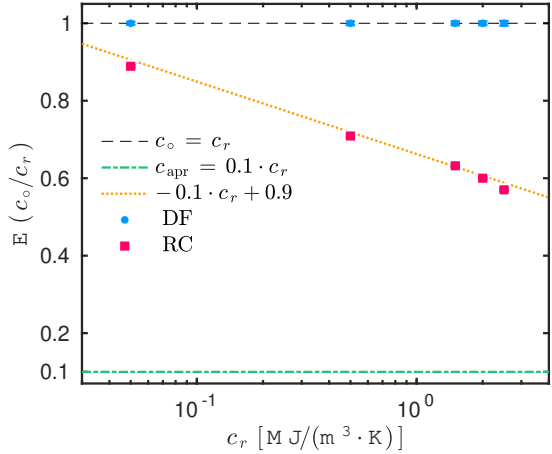}
\caption{Comparison of the expectation of the estimated parameter $c_{\,\circ}$ with the real parameter $c_{\,\real}\,$.}
\label{fig:c_P_frhoc}
\end{figure}

\begin{figure}
\centering
\subfigure[\label{fig:c_Nk_frhoc}]{\includegraphics[width=.45\textwidth]{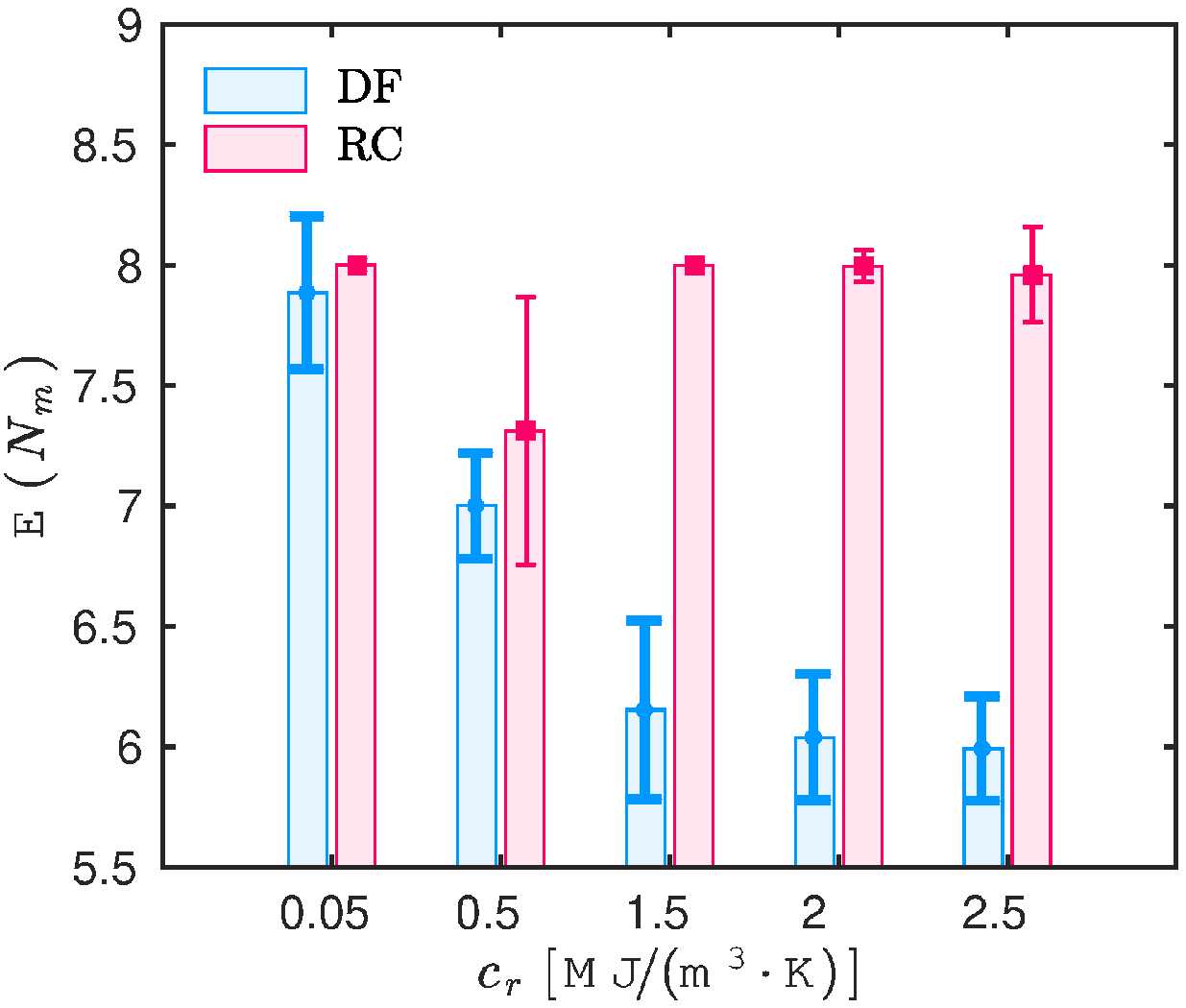}}  \hspace{0.2cm}
\subfigure[\label{fig:c_cpu_frhoc}]{\includegraphics[width=.45\textwidth]{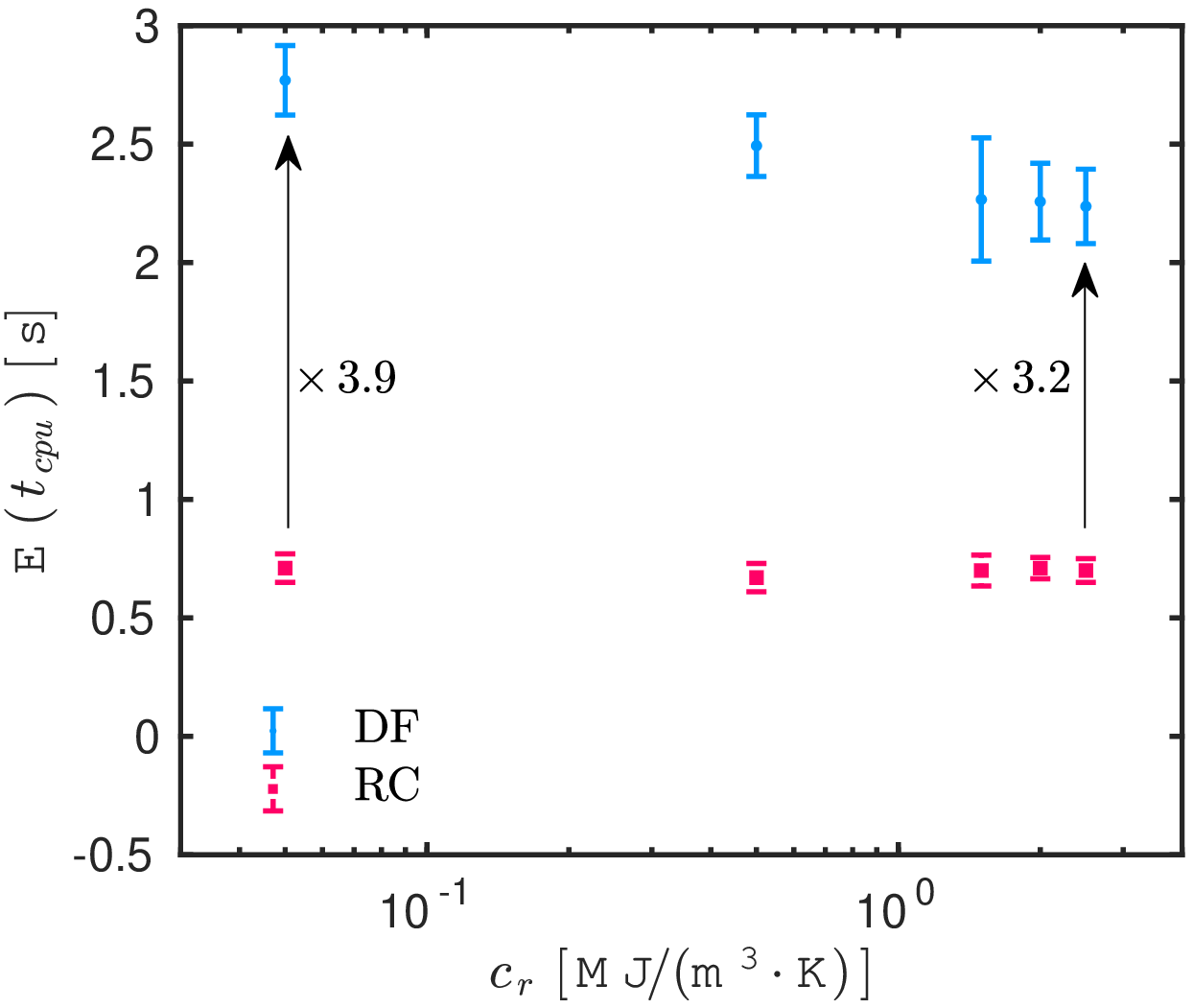}}
\caption{Variation of the expectation of the number of iteration $N_{\,m}$ \emph{(a)} and the computational time $t_{\,\mathrm{cpu}}$ \emph{(b)} for the algorithm to estimate the unknown parameter $c$ for the five types of materials.}
\end{figure}

\begin{figure}
\centering
\subfigure[\label{fig:c_T_all_ft_mat1}]{\includegraphics[width=.45\textwidth]{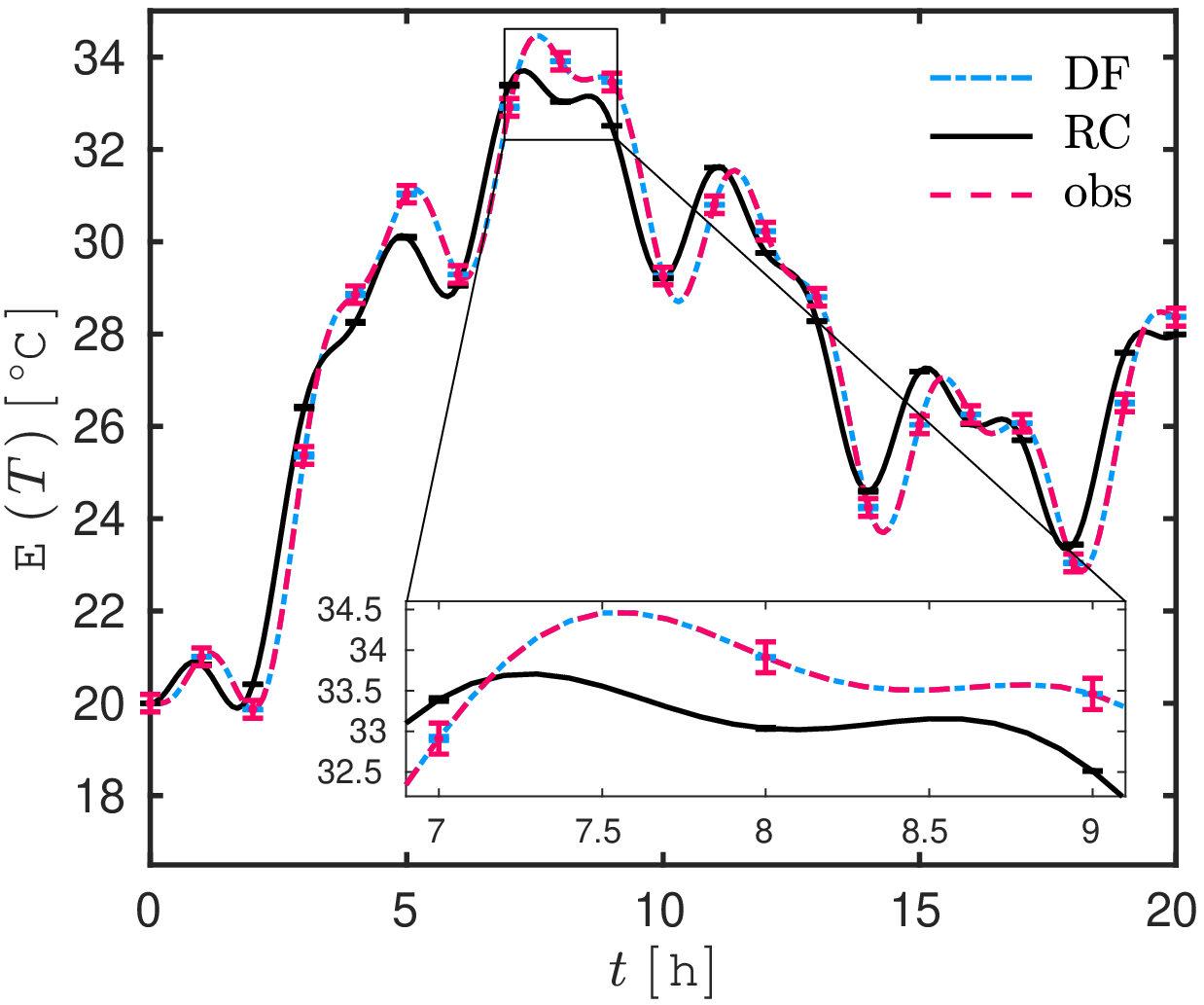}}  \hspace{0.2cm}
\subfigure[\label{fig:c_T_all_ft_mat3}]{\includegraphics[width=.45\textwidth]{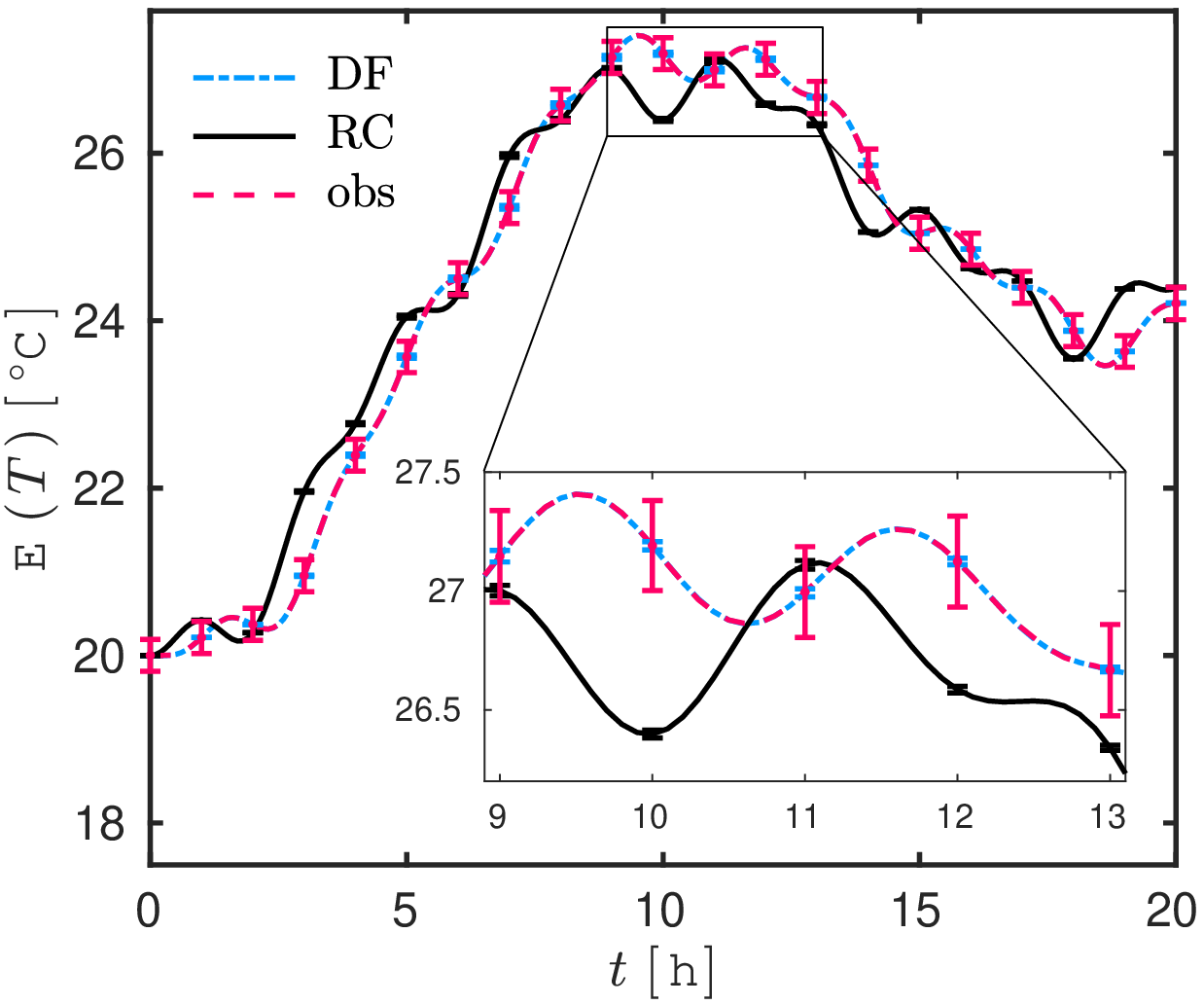}}
\caption{Time evolution of the temperature at $x \egal x_{\,\obs} \egal 11 \ \mathsf{cm}$ for material $1$ \emph{(a)} and material $3$ \emph{(b)} computed with the numerical model for $c \egal c_{\,\circ}\,$.}
\end{figure}

\begin{figure}
\centering
\subfigure[\label{fig:c_crit1_fk}]{\includegraphics[width=.45\textwidth]{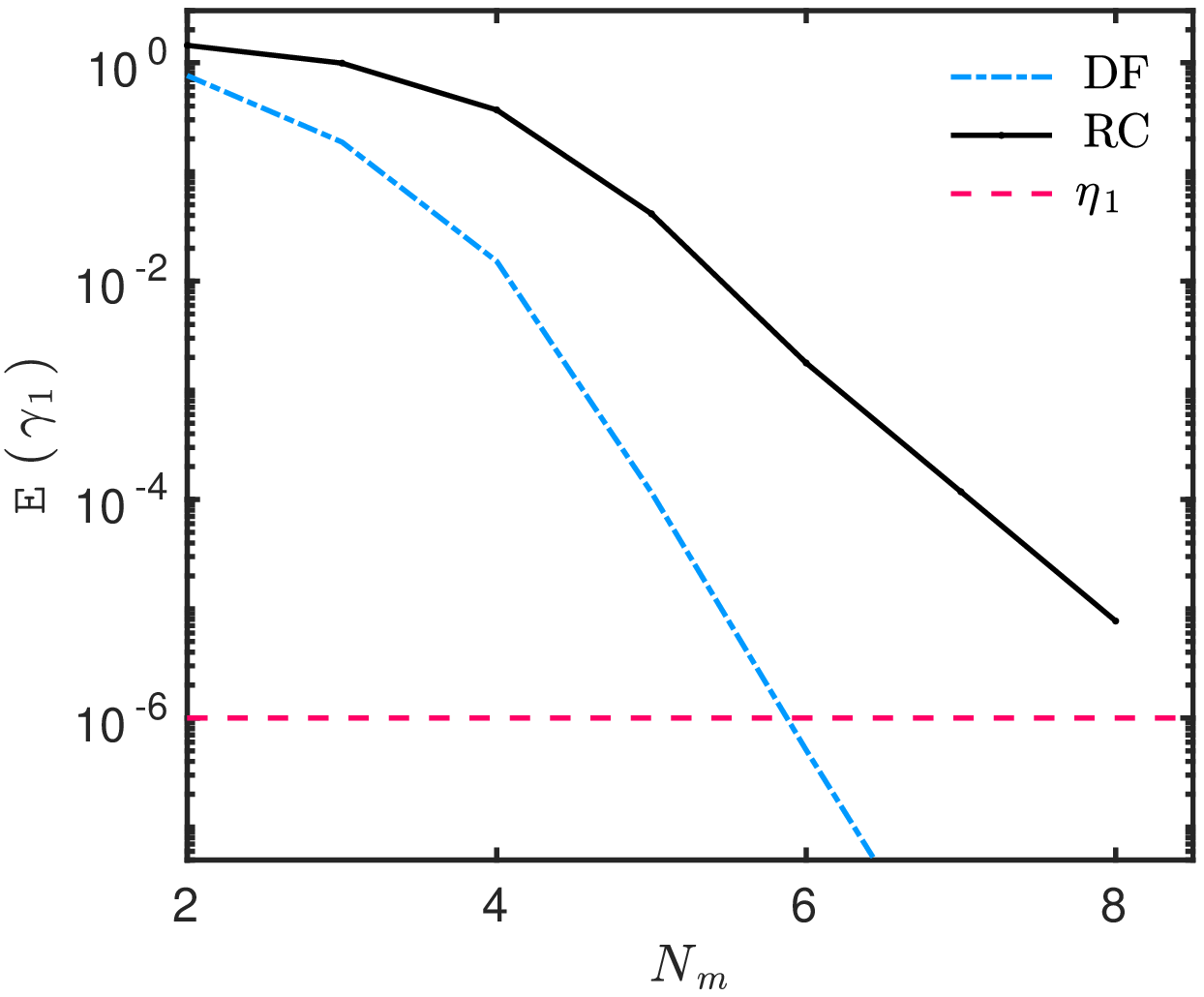}}  \hspace{0.2cm}
\subfigure[\label{fig:c_crit2_fk}]{\includegraphics[width=.45\textwidth]{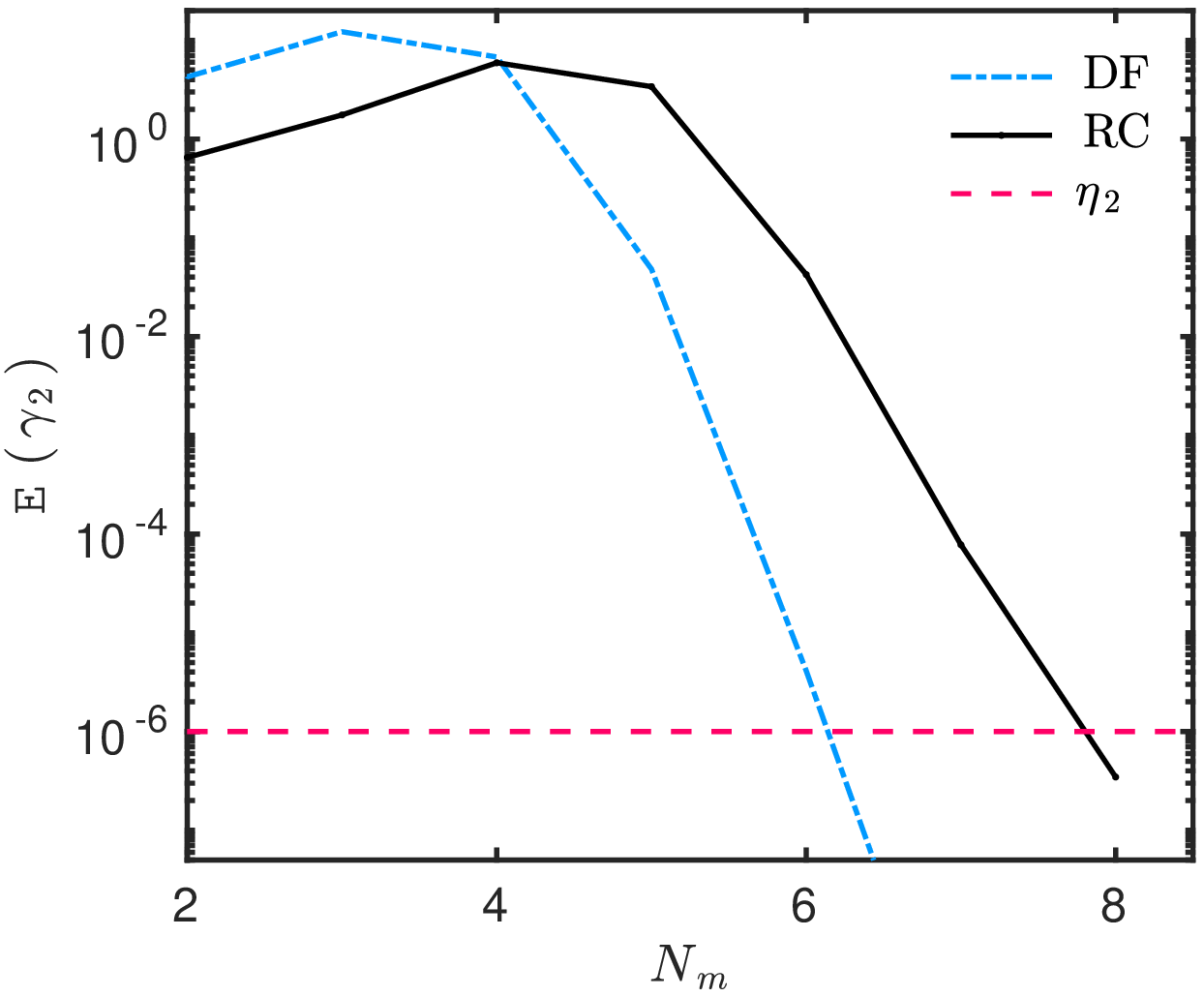}}
\caption{Variation of the expectation of the convergence criteria $\gamma_{\,1}$ \emph{(a)} and $\gamma_{\,2}$ \emph{(b)} for the estimation of the unknown parameter $c$ for the material $3$.}
\end{figure}

\begin{table}
\centering
\caption{Results for the estimation of the unknown volumetric heat capacity $c \,$.}
\label{tab:c_results_stats}
\setlength{\extrarowheight}{.5em}
\begin{tabular}[l]{@{} c|cc|cc|cc|cc|cc|cc}
\hline
\hline
& \multicolumn{4}{c|}{\textit{Ratio $\displaystyle \frac{c_{\,\circ}}{c_{\,\real}}$}}
& \multicolumn{4}{c|}{\textit{Number of iterations $N_{\,m}$}}
& \multicolumn{4}{c}{\textit{Computational time $t_{\,\mathrm{CPU}} \ \unit{s}$}} \\[8pt]
\textit{Material}
& \multicolumn{2}{c|}{\textit{DF model}}
& \multicolumn{2}{c|}{\textit{RC model}}
& \multicolumn{2}{c|}{\textit{DF model}}
& \multicolumn{2}{c|}{\textit{RC model}}
& \multicolumn{2}{c|}{\textit{DF model}}
& \multicolumn{2}{c}{\textit{RC model}} \\
\textit{Identification}
& $\mathsf{E}$
& $\sigma$
& $\mathsf{E}$
& $\sigma$
& $\mathsf{E}$
& $\sigma$
& $\mathsf{E}$
& $\sigma$
& $\mathsf{E}$
& $\sigma$
& $\mathsf{E}$
& $\sigma$ \\
\hline
1
& $1.0$ & $0.004$
& $0.89$ & $0.004$
& $7.9$ & $0.32$
& $8$ & $0.03$ 
& $2.8$ & $0.15$
& $0.7$ & $0.05$ \\
2
& $1.0$ & $0.005$
& $0.71$ & $0.003$
& $7.0$ & $0.22$
& $7.3$ & $0.56$ 
& $2.5$ & $0.13$
& $0.7$ & $0.05$ \\
3
& $1.0$ & $0.005$
& $0.63$ & $0.003$
& $6.2$ & $0.38$
& $8$ & $0$ 
& $2.3$ & $0.26$
& $0.7$ & $0.05$ \\
4
& $1.0$ & $0.005$
& $0.6$ & $0.003$
& $6.0$ & $0.26$
& $8$ & $0.06$ 
& $2.3$ & $0.16$
& $0.7$ & $0.05$ \\
5
& $1.0$ & $0.006$
& $0.57$ & $0.003$
& $6.0$ & $0.22$
& $8$ & $0.2$ 
& $2.4$ & $0.16$
& $0.7$ & $0.05$ \\
\hline
\hline
\end{tabular}
\end{table}

\subsection{Estimation of the thermal conductivity}

The issue is now to estimate the thermal conductivity $k\,$ for the five material. Let us prove the identifiability of the parameter in each model. The demonstration is similar to the one for the previous case study. 
\begin{proposition}
For the DF model, the parameter $k$ is identifiable in equation~\eqref{eq:heat1d}.
\end{proposition}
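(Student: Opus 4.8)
The plan is to mirror exactly the structure of the two preceding identifiability proofs, since the statement is the natural analogue for the parameter $k$ in the heat equation~\eqref{eq:heat1d}. First I would suppose that two observables $T$ and $T^{\,\prime}$ satisfy the model with conductivities $k$ and $k^{\,\prime}$ respectively (and the same $c$), that is $c \cdot \pd{T}{t} = k \cdot \pd{^{\,2}T}{x^{\,2}}$ and $c \cdot \pd{T^{\,\prime}}{t} = k^{\,\prime} \cdot \pd{^{\,2}T^{\,\prime}}{x^{\,2}}$. Then I would invoke the SGI hypothesis $T \equiv T^{\,\prime}$, which forces $\pd{T}{t} \equiv \pd{T^{\,\prime}}{t}$ and $\pd{^{\,2}T}{x^{\,2}} \equiv \pd{^{\,2}T^{\,\prime}}{x^{\,2}}$.

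Subtracting the two equations then yields $\bigl(\, k \moins k^{\,\prime} \,\bigr) \cdot \pd{^{\,2}T}{x^{\,2}} \egal 0$. The conclusion $k \equiv k^{\,\prime}$ follows provided $\pd{^{\,2}T}{x^{\,2}}$ does not vanish identically; for a genuinely transient temperature field driven by the time-varying Robin data this second spatial derivative is not identically zero (indeed by~\eqref{eq:heat1d} it equals $\tfrac{c}{k}\pd{T}{t}$, which is non-trivial since the boundary forcing is non-stationary), so one concludes $k$ is SGI. This is the step that is, strictly speaking, the only real content: in the companion proofs for $c$ the analogous non-degeneracy of $\pd{T}{t}$ is tacitly assumed, so I would keep the same level of rigor and simply state that $\pd{^{\,2}T}{x^{\,2}} \not\equiv 0$ for a non-trivial solution.

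I do not anticipate a genuine obstacle here — the argument is three lines — but if one wanted to be fully careful the subtle point is ensuring the non-vanishing of $\pd{^{\,2}T}{x^{\,2}}$; this is where I would spend a sentence rather than leaving it implicit. Concretely I would write:

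\begin{proof}
We assume an observable $T\,\bigl(\,-\,,\,-\,\bigr)$ verifies the model:
\begin{align*}
c \cdot \pd{T}{t} \egal k \cdot \pd{^{\,2} T}{x^{\,2}} \,.
\end{align*}
Another observable, denoted with a superscript $^{\,\prime}\,$, obtained with another parameter $k^{\,\prime}$ holds:
\begin{align*}
c \cdot \pd{T^{\,\prime}}{t} \egal k^{\,\prime} \cdot \pd{^{\,2} T^{\,\prime}}{x^{\,2}} \,.
\end{align*}
If $T \, \equiv \, T^{\,\prime}$ then $\displaystyle \pd{T}{t} \, \equiv \, \pd{T^{\,\prime}}{t}$ and $\displaystyle  \pd{^{\,2} T}{x^{\,2}} \, \equiv \, \pd{^{\,2} T^{\,\prime}}{x^{\,2}} \,$. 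Thus, by subtraction, we obtain:
\begin{align*}
\Bigl(\, k \moins k^{\,\prime} \,\Bigr) \cdot \pd{^{\,2} T}{x^{\,2}} \egal 0 \,.
\end{align*}
Since the boundary conditions are time dependent, the solution is not stationary and $\displaystyle \pd{^{\,2} T}{x^{\,2}} \, \not\equiv \, 0 \,$. Hence $k \, \equiv \, k^{\,\prime}$ and parameter $k$ is SGI.
\end{proof}
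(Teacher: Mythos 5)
Your proof is correct and follows essentially the same route as the paper: subtract the two equations under the hypothesis $T \equiv T^{\,\prime}$ to get $\bigl(\, k \moins k^{\,\prime} \,\bigr) \cdot \pd{^{\,2} T}{x^{\,2}} \egal 0$ and conclude. The only difference is that you explicitly justify $\pd{^{\,2} T}{x^{\,2}} \not\equiv 0$ via the non-stationary boundary forcing, a point the paper leaves tacit here (though it does state the analogous non-vanishing condition in the RC version of this proposition), so your version is if anything slightly more careful.
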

\begin{proof}
We assume an observable $T\,\bigl(\,-\,,\,-\,\bigr)$ verifies the model:
\begin{align}
\label{eq:SGI_heat_eq_k}
c \cdot \pd{T}{t} \egal k \cdot \pd{^{\,2} T}{x^{\,2}} \,.
\end{align}
Another observable, denoted with a superscript $^{\,\prime}\,$, obtained with another parameter $k^{\,\prime}$ is detained:
\begin{align}
\label{eq:SGI_heat_eq_k_prime}
c \cdot \pd{T^{\,\prime}}{t} \egal k^{\,\prime} \cdot \pd{^{\,2} T^{\,\prime}}{x^{\,2}} \,.
\end{align}
If $T \, \equiv \, T^{\,\prime}$ then $\displaystyle \pd{T}{t} \, \equiv \, \pd{T^{\,\prime}}{t}$ and $\displaystyle  \pd{^{\,2} T}{x^{\,2}} \, \equiv \, \pd{^{\,2} T^{\,\prime}}{x^{\,2}} \,$. Thus, from equations~\eqref{eq:SGI_heat_eq_k} and \eqref{eq:SGI_heat_eq_k_prime}, we obtain:
\begin{align*}
\Bigl(\, k \moins k^{\,\prime} \,\Bigr) \cdot  \pd{^{\,2} T}{x^{\,2}} \egal 0 \,.
\end{align*}
Thus, $k \, \equiv \, k^{\,\prime}$ and parameter $k$ is SGI. 
\end{proof}
Secondly, the identifiability is proven for the RC model. \\~
\begin{proposition}
The parameter $k$ is identifiable in equation~\eqref{eq:RC_model}.
\end{proposition}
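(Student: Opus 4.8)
The plan is to reproduce, essentially verbatim, the SGI argument already used for the heat capacity $c$ in Equation~\eqref{eq:RC_model}. First I would assume an observable $T\,\bigl(\,-\,\bigr)$ is produced by the RC model
\[
e^{\,2} \cdot c \cdot \od{T_{\,2}}{t} \egal k \cdot \biggl(\, T_{\,3} \moins 2 \cdot T_{\,2} \plus T_{\,1} \,\biggr)
\]
with conductivity $k\,$, and that a second observable $T^{\,\prime}$ is produced by the same model with conductivity $k^{\,\prime}\,$, the heat capacity $c$ being held fixed. Then, using the hypothesis $T \, \equiv \, T^{\,\prime}$ — which forces $T_{\,1} \, \equiv \, T_{\,1}^{\,\prime}\,$, $T_{\,2} \, \equiv \, T_{\,2}^{\,\prime}\,$, $T_{\,3} \, \equiv \, T_{\,3}^{\,\prime}$ and therefore $\od{T_{\,2}}{t} \, \equiv \, \od{T_{\,2}^{\,\prime}}{t}$ — I would subtract the two copies of the equation to obtain
\[
\Bigl(\, k \moins k^{\,\prime} \,\Bigr) \cdot \biggl(\, T_{\,3} \moins 2 \cdot T_{\,2} \plus T_{\,1} \,\biggr) \egal 0 \,,
\]
and conclude that $k \, \equiv \, k^{\,\prime}\,$, i.e. that $k$ is SGI in the lumped model.

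The single step that requires a word of justification — and the point I expect to be the main obstacle — is the last cancellation: it yields $k \equiv k^{\,\prime}$ only provided the factor $T_{\,3} \moins 2 \cdot T_{\,2} \plus T_{\,1}\,$, which is the discrete Laplacian of the temperature across the cell $\mathcal{C}\,$, is not identically zero on $\bigl[\, 0 \,,\, t_{\,\fin} \,\bigr]\,$. This is the RC counterpart of the tacit hypothesis $\pd{^{\,2} T}{x^{\,2}} \not\equiv 0$ used in the DF proof for $k\,$: a spatially affine (in particular uniform) steady temperature would annihilate the right-hand side of~\eqref{eq:RC_model} and leave $k$ undetermined. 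Here such a degenerate state is excluded because the time-varying \textsc{Robin} conditions~\eqref{eq:RC_model_BC} drive a genuinely curved, non-stationary profile in the cell, so $T_{\,3} \moins 2 \cdot T_{\,2} \plus T_{\,1} \not\equiv 0\,$. I would therefore simply record this non-degeneracy observation and let the SGI property of $k$ follow, exactly as in the companion propositions above.
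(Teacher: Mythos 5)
Your argument is the same as the paper's: subtract the two copies of Equation~\eqref{eq:RC_model} under the hypothesis $T \equiv T^{\,\prime}$ to get $\bigl(\,k - k^{\,\prime}\,\bigr)\cdot\bigl(\,T_{\,3} - 2\,T_{\,2} + T_{\,1}\,\bigr) = 0$ and cancel the nonzero factor. Your explicit justification of the non-degeneracy condition $T_{\,3} - 2\,T_{\,2} + T_{\,1} \not\equiv 0$ (which the paper merely asserts) is a welcome refinement but does not change the route.
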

\begin{proof}
We assume an observable $T$ obtained for the RC model:
\begin{align}
\label{eq:SGI_RC_model_k}
e^{\,2} \cdot  c \, \od{T_{\,2}}{t} \egal k \cdot \biggl(\, T_{\,3} \moins 2 \cdot T_{\,2} \plus T_{\,1} \,\biggr) \,.
\end{align}
Another observable, denoted with a superscript $^{\,\prime}\,$, obtained with another parameter $k^{\,\prime}$ holds:
\begin{align}
\label{eq:SGI_RC_model_k_prime}
e^{\,2} \cdot  c \, \od{T_{\,2}^{\,\prime}}{t} \egal k^{\,\prime} \cdot \biggl(\, T_{\,3}^{\,\prime} \moins 2 \cdot T_{\,2}^{\,\prime} \plus T_{\,1}^{\,\prime} \,\biggr) \,.
\end{align}
If $T \, \equiv \, T^{\,\prime}$ then $\displaystyle \od{T}{t} \, \equiv \, \od{T^{\,\prime}}{t}\,$. Thus, from equations~\eqref{eq:SGI_RC_model_k} and \eqref{eq:SGI_RC_model_k_prime}, one obtain:
\begin{align*}
\Bigl(\, k \moins k^{\,\prime} \,\Bigr) \cdot \biggl(\, T_{\,3} \moins 2 \cdot T_{\,2} \plus T_{\,1} \,\biggr) \egal 0 \,.
\end{align*}
Since $\biggl(\, T_{\,3} \moins 2 \cdot T_{\,2} \plus T_{\,1} \,\biggr) \, \neq \, 0\,$, one can deduce that $k \, \equiv \, k^{\,\prime}$ and that parameter $k$ is SGI in the RC model. 
\end{proof}

Before generating the experimental observations, an important remark can be formulated. From an mathematical point of view, it can be noted that only the ratio $\frac{k}{c}$ is identifiable in each model. One could question the necessity of evaluating the reliability of the models for the estimation of $k$ since the results might be similar to the ones obtained for the parameter $c\,$. Nevertheless, from a practical point of view, once estimated, these parameters are used in computational tools for evaluating the building energy requirements in the context of thermal regulations. Thus, it is of major importance to see evaluate the accuracy of each model to recover each parameters.

With this results, the experimental observations can be generated using the real thermal conductivity $k_{\,\real}$ given in Table~\ref{tab:material}. For each material, $N_{\,s}$  experimental observations are produced. The heat capacity is a given parameter from Table~\ref{tab:material} for each case. The heat transfer coefficient at the left boundary is also fixed to $h_{\,L} \egal 15 \ \mathsf{W\,/\,(m^{\,2} \cdot K)}\,$. In the algorithm to estimate the unknown parameter, the initial guess is prescribed as $k_{\,\apr} \egal 0.1 \cdot k_{\,\real} \,$.

Figure~\ref{fig:k_P_flamb} compares the expectation over the $N_{\,s}$ samples of observation of the estimated parameter  with respect to the real parameter. As in the previous case, the estimation using the DF approach is accurate and the order of the ratio is the unity $\displaystyle \mathcal{O}\,\biggl(\, \frac{k_{\,\circ}}{k_{\,\real}} \,\biggr) \, \simeq \, 1 \,$. A slight increase of the standard deviation with the thermal conductivity can be noted in Table~\ref{tab:k_results_stats}. For the RC model, the estimation is not satisfactory. The maximum error goes upt to $80 \%$ and is observed for large thermal conductivity $k_{\,\real} \egal 2.5 \ \mathsf{W\,/\,(m^{\,2} \cdot K)}\,$. In addition, the estimation error is decreasing faster than for the previous case with a slope around $\simeq \, - \,0.18\,$. 
The number of iterations to estimate the parameter is stable around $8$ for the algorithm using the DF model. For the RC approach, the algorithm needs more iterations. The number of iterations tends to increase with the thermal conductivity.
Figure~\ref{fig:k_cpu_flamb} gives the mean of the computational time required by the algorithm to estimate the unknown parameter $k\,$. More details are provided in Table~\ref{tab:k_results_stats}. As expected, the approach using DF is longer. As the number of iterations to converge increases with $k_{\,\real}$ for the RC approach, the ratio of CPU times between both models decreases.

A comparison between the prediction of the models, computed with the estimated parameter $k_{\,\circ}\,$, and the experimental observations is shown in Figures~\ref{fig:k_T_all_ft_mat2} and ~\ref{fig:k_T_all_ft_mat5} for materials $2$ and $5\,$, respectively. The predictions obtained with the RC model are not reliable. The difference between the observations and the RC numerical predictions can reach $2 \ \mathsf{^{\,\circ}C}\,$. Figures~\ref{fig:k_crit1_fk} and \ref{fig:k_crit2_fk} present the variation of the convergence criteria with the number of iterations for material $5\,$. It is consistent with results presented in Figure~\ref{fig:k_Nk_flamb}. The algorithm based on RC model requires more iterations to converge. It can be remarked that for $8$ iterations, in the algorithm using the DF model, both criteria $\gamma_{\,1}$ and $\gamma_{\,2}$ are satisfied. It indicates that both magnitudes of changes in the cost function and in the unknown parameter are low. For the algorithm with the RC model, only criteria $\gamma_{\,2}$ on the magnitude of the cost function is satisfied for $12$ iterations.

\begin{figure}
\centering
\includegraphics[width=.45\textwidth]{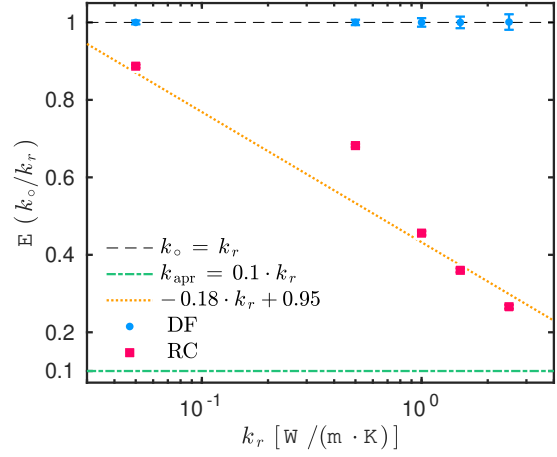}
\caption{Comparison of the expectation of the estimated parameter $k_{\,\circ}$ with the real parameter $k_{\,\real}\,$.}
\label{fig:k_P_flamb}
\end{figure}

\begin{figure}
\centering
\subfigure[\label{fig:k_Nk_flamb}]{\includegraphics[width=.45\textwidth]{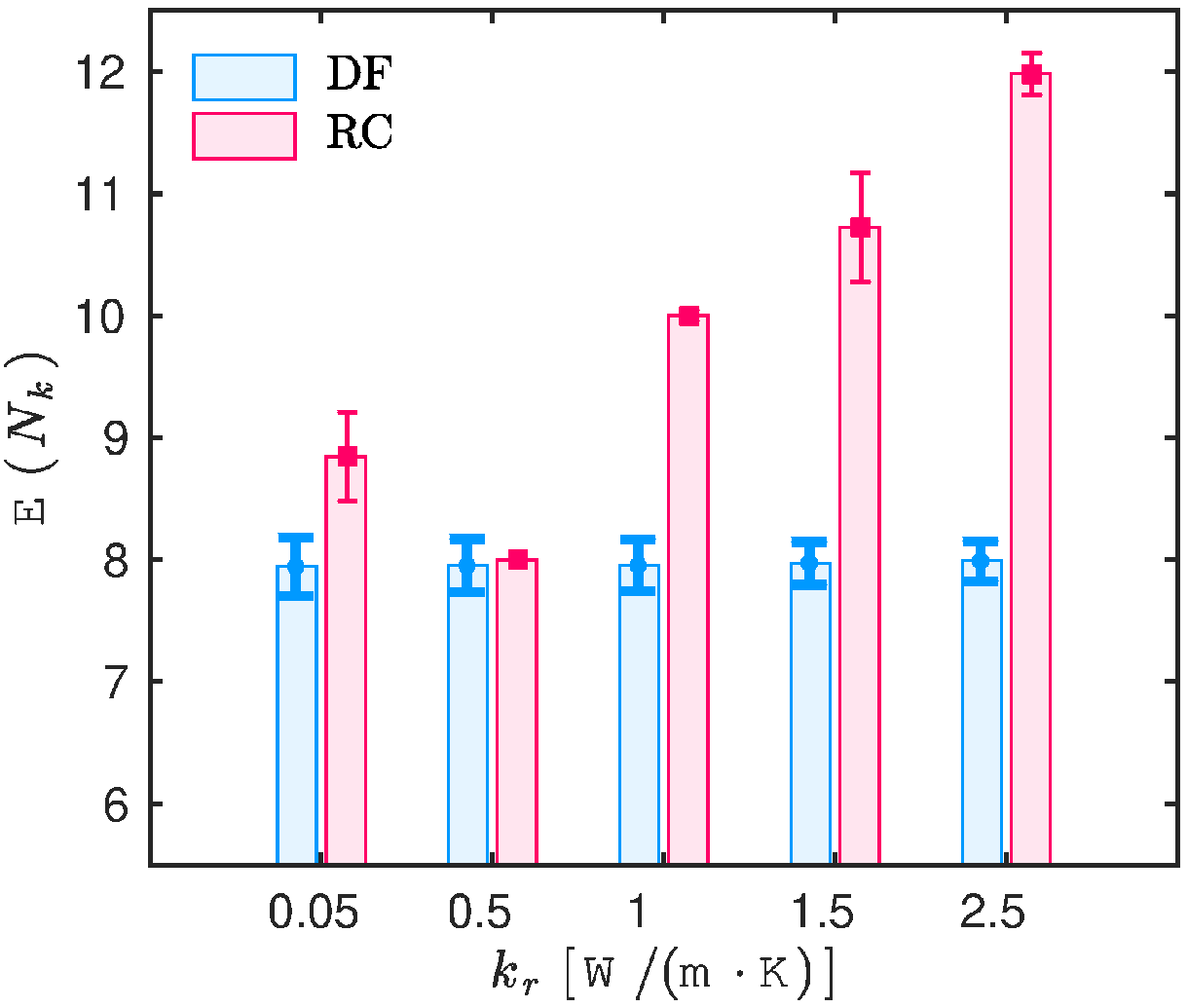}}  \hspace{0.2cm}
\subfigure[\label{fig:k_cpu_flamb}]{\includegraphics[width=.45\textwidth]{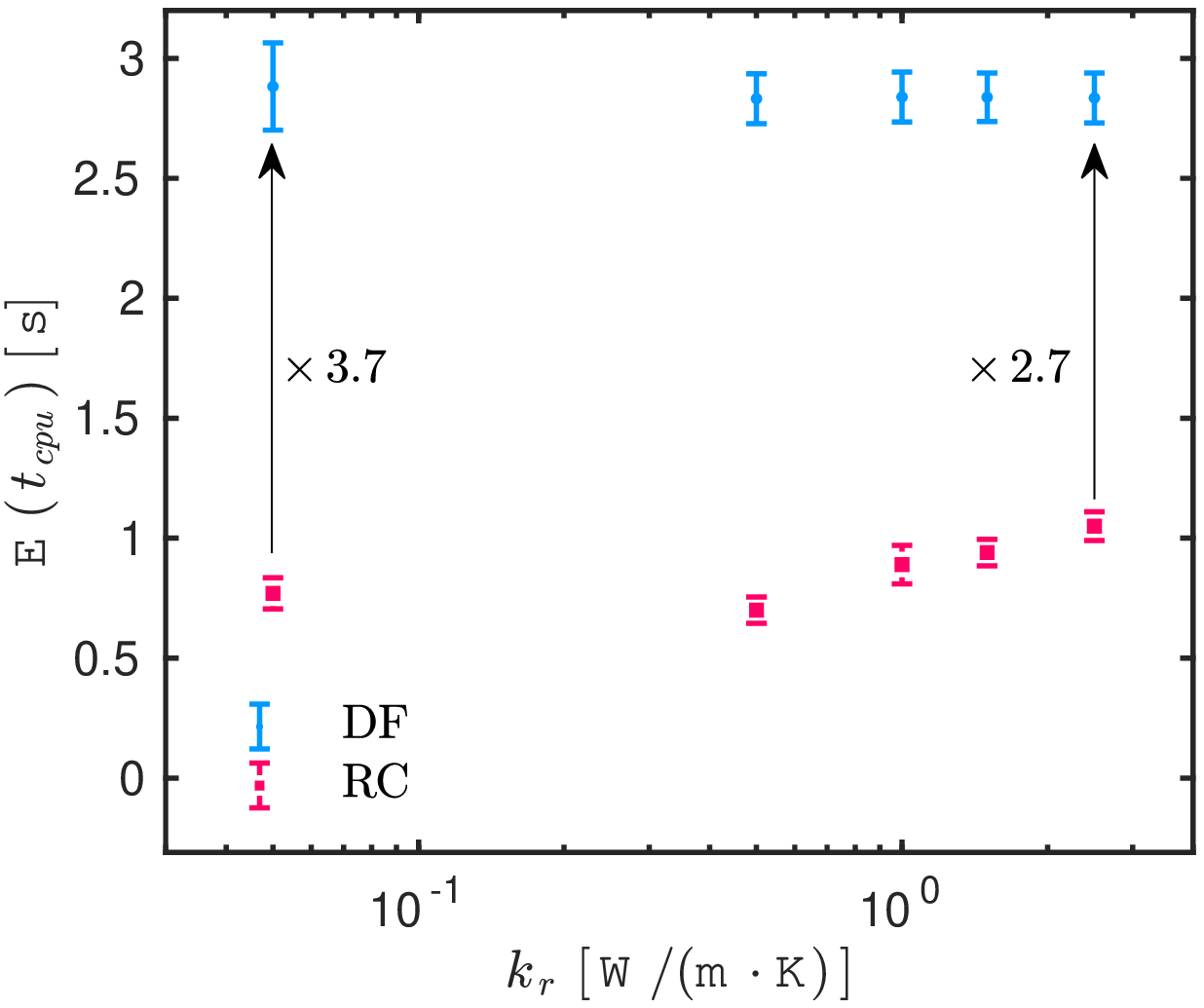}}
\caption{Variation of the expectation of the number of iteration $N_{\,m}$ \emph{(a)} and the computational time $t_{\,\mathrm{cpu}}$ \emph{(b)} for the algorithm to estimate the unknown parameter $k$ for the five types of material.}
\end{figure}

\begin{figure}
\centering
\subfigure[\label{fig:k_T_all_ft_mat2}]{\includegraphics[width=.45\textwidth]{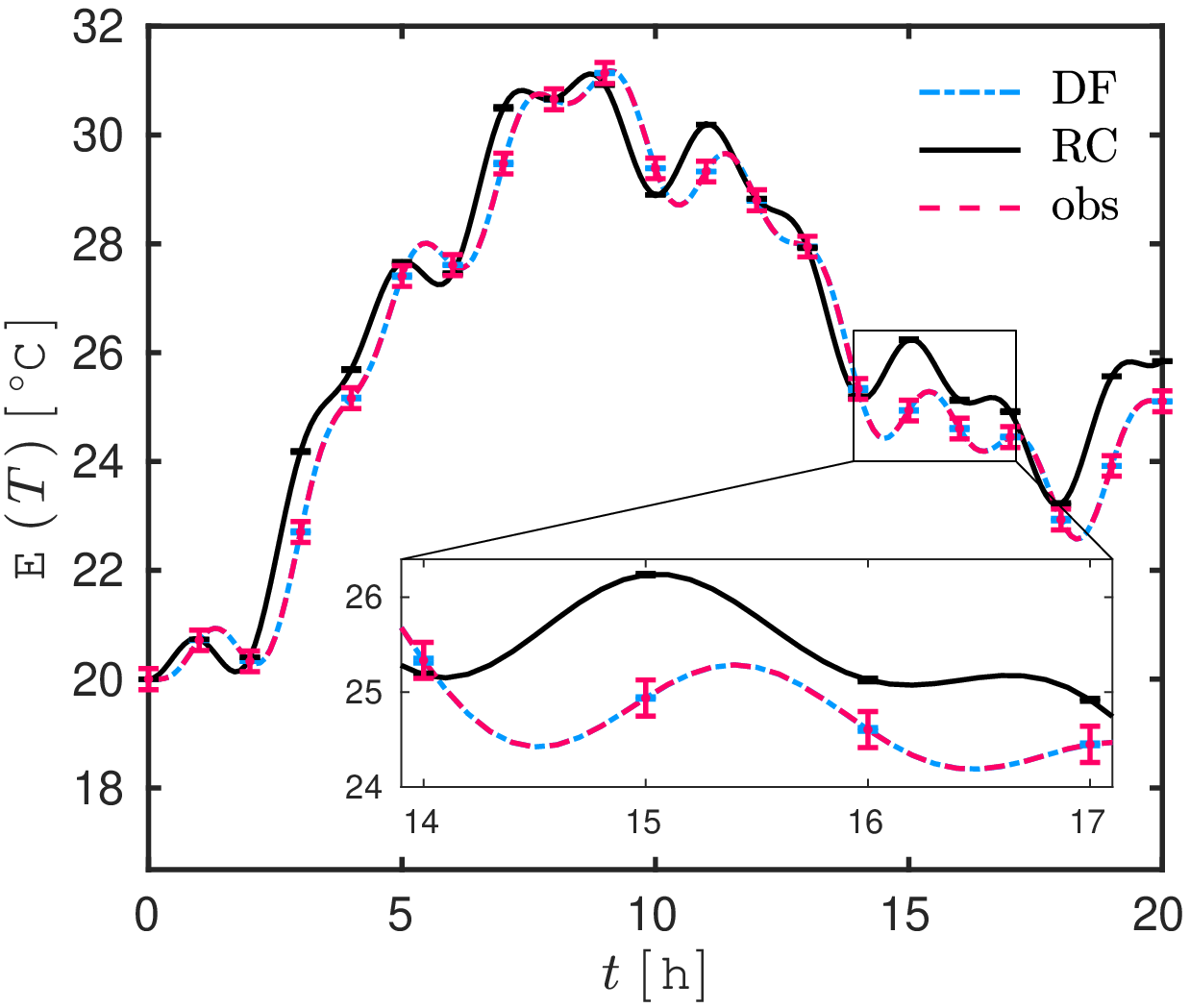}}  \hspace{0.2cm}
\subfigure[\label{fig:k_T_all_ft_mat5}]{\includegraphics[width=.45\textwidth]{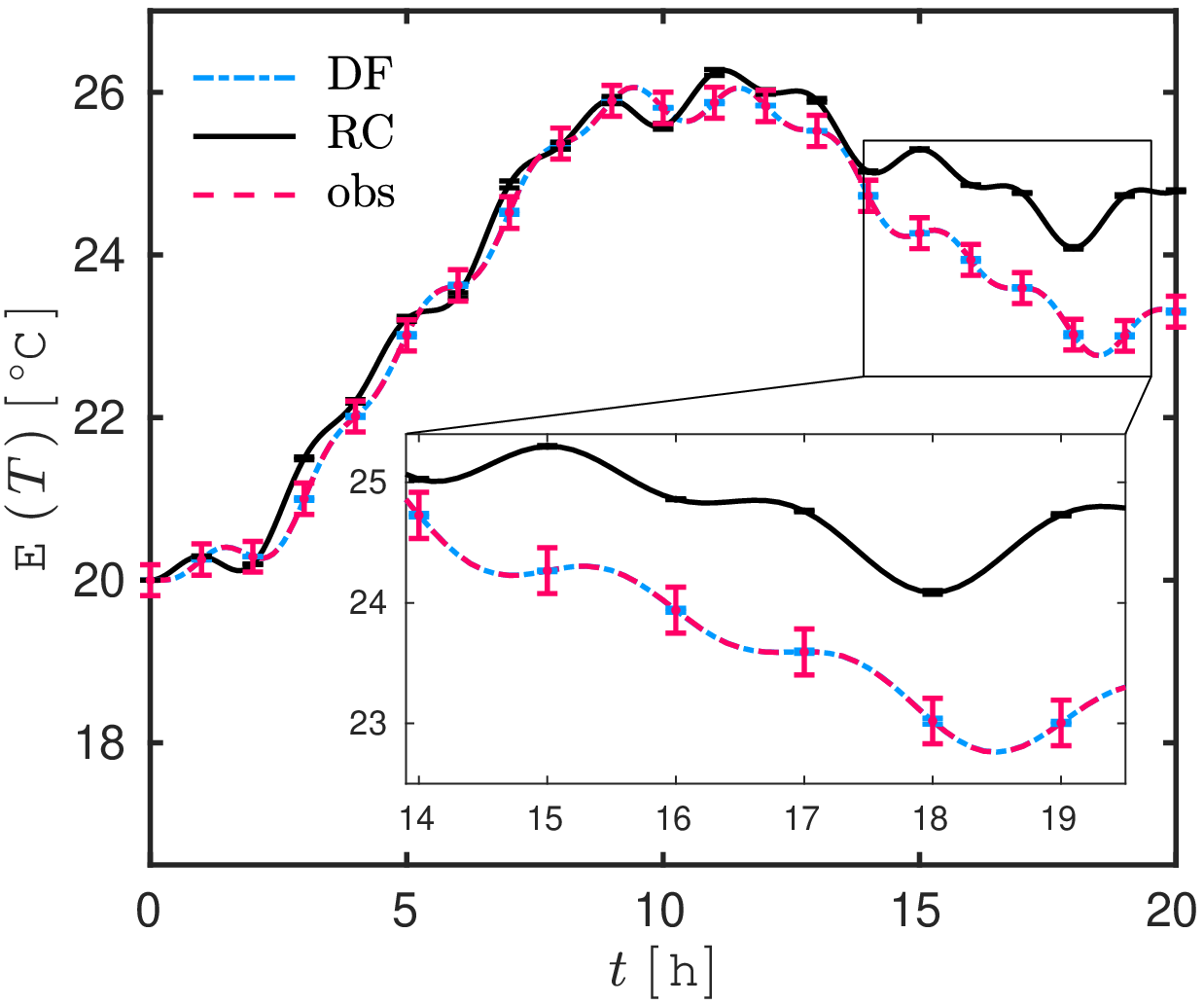}}
\caption{Time evolution of the temperature at $x \egal x_{\,\obs} \egal 11 \ \mathsf{cm}$ for material $2$ \emph{(a)} and material $5$ \emph{(b)} computed with the numerical model for $k \egal k_{\,\circ}\,$.}
\end{figure}

\begin{figure}
\centering
\subfigure[\label{fig:k_crit1_fk}]{\includegraphics[width=.45\textwidth]{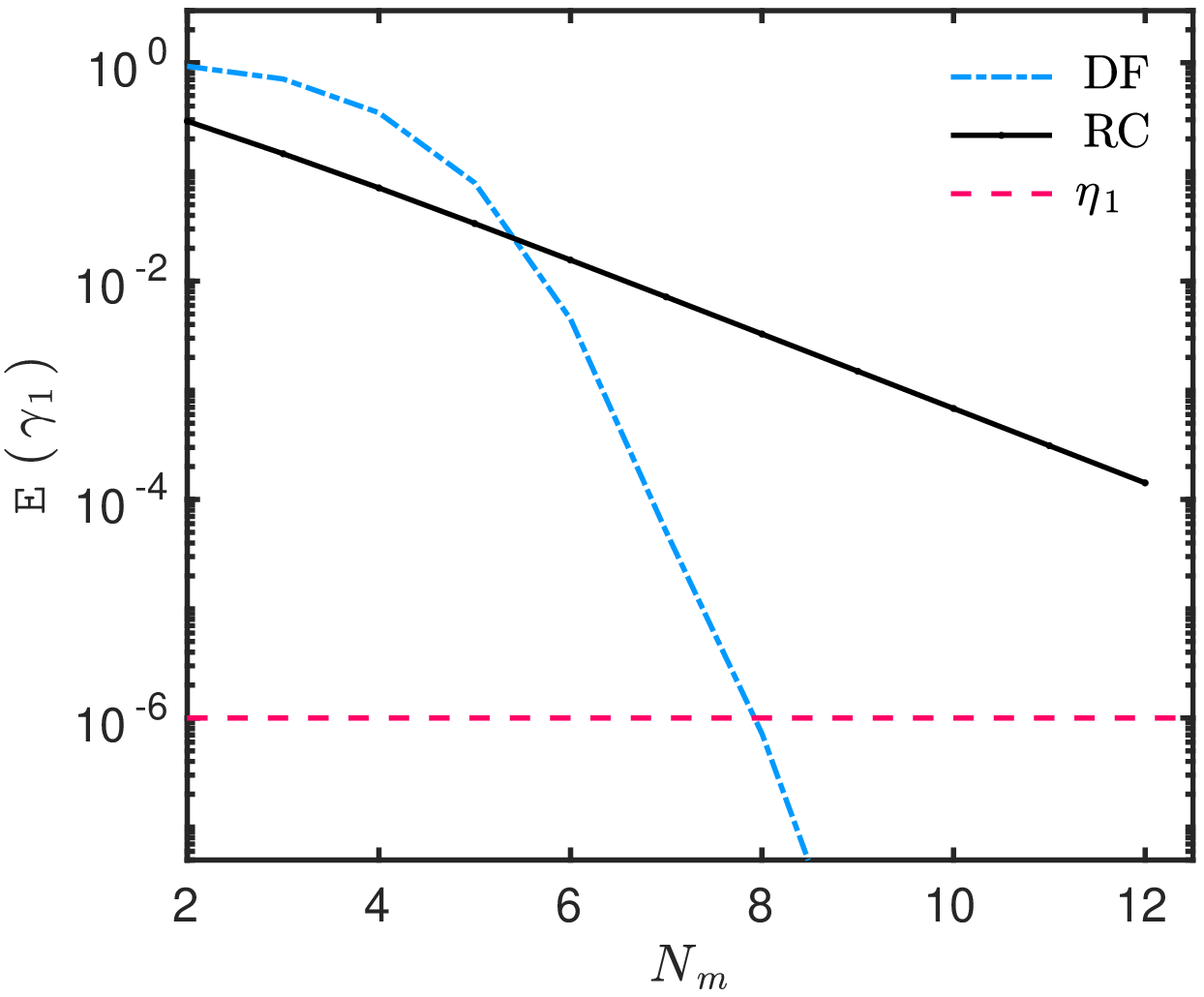}}  \hspace{0.2cm}
\subfigure[\label{fig:k_crit2_fk}]{\includegraphics[width=.45\textwidth]{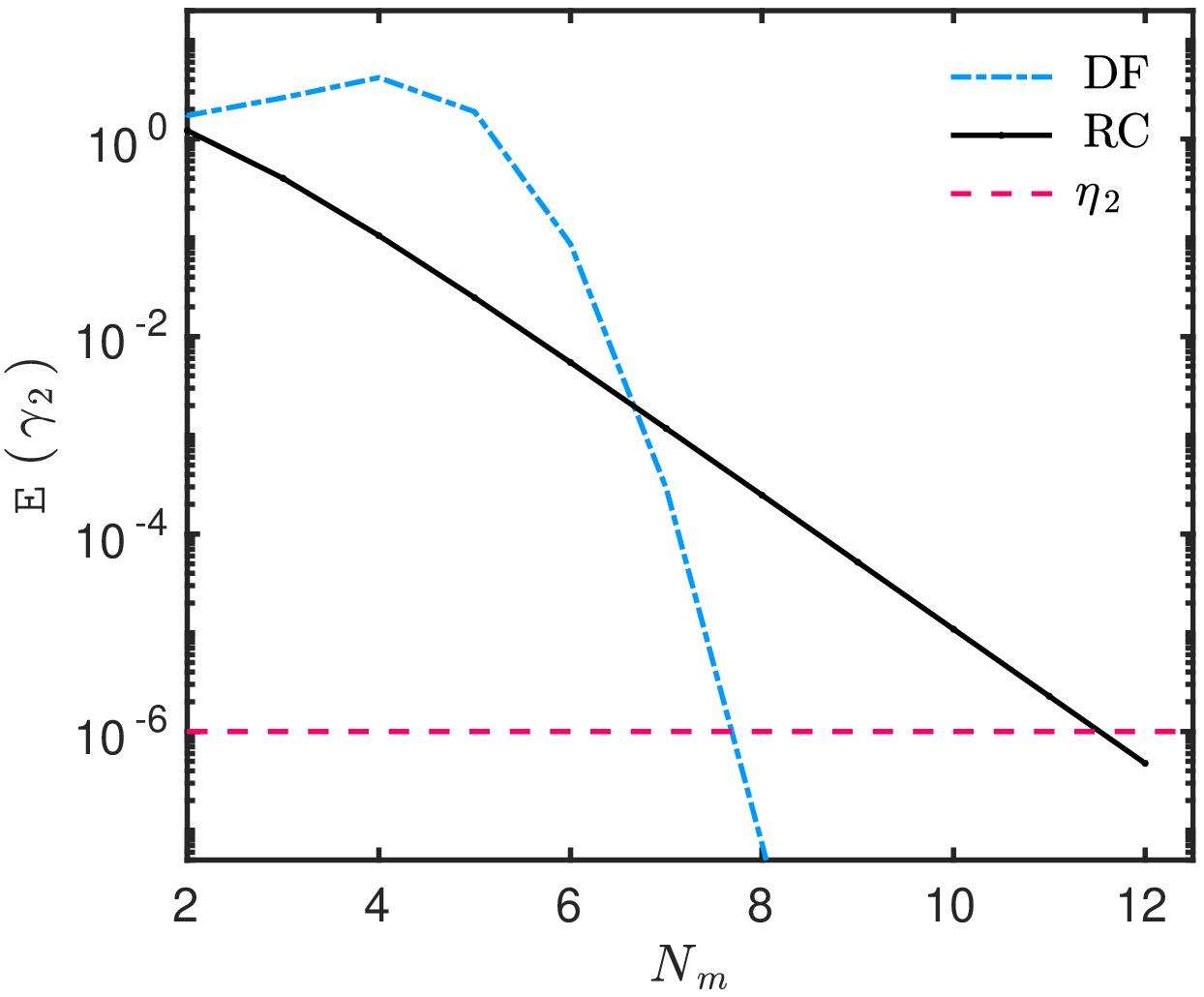}}
\caption{Variation of the expectation of the convergence criteria $\gamma_{\,1}$ \emph{(a)} and $\gamma_{\,2}$ \emph{(b)} for the estimation of the unknown parameter $k$ for the material $5$.}
\end{figure}

\begin{table}
\centering
\caption{Results for the estimation of the unknown thermal conductivity $k \,$.}
\label{tab:k_results_stats}
\setlength{\extrarowheight}{.5em}
\begin{tabular}[l]{@{} c|cc|cc|cc|cc|cc|cc}
\hline
\hline
& \multicolumn{4}{c|}{\textit{Ratio $\displaystyle \frac{k_{\,\circ}}{k_{\,\real}}$}}
& \multicolumn{4}{c|}{\textit{Number of iterations $N_{\,m}$}}
& \multicolumn{4}{c}{\textit{Computational time $t_{\,\mathrm{CPU}} \ \unit{s}$}} \\[8pt]
\textit{Material}
& \multicolumn{2}{c|}{\textit{DF model}}
& \multicolumn{2}{c|}{\textit{RC model}}
& \multicolumn{2}{c|}{\textit{DF model}}
& \multicolumn{2}{c|}{\textit{RC model}}
& \multicolumn{2}{c|}{\textit{DF model}}
& \multicolumn{2}{c}{\textit{RC model}} \\
\textit{Identification}
& $\mathsf{E}$
& $\sigma$
& $\mathsf{E}$
& $\sigma$
& $\mathsf{E}$
& $\sigma$
& $\mathsf{E}$
& $\sigma$
& $\mathsf{E}$
& $\sigma$
& $\mathsf{E}$
& $\sigma$ \\
\hline
1
& $1.0$ & $0.005$
& $0.89$ & $0.004$
& $7.9$ & $0.24$
& $8.8$ & $0.36$ 
& $2.8$ & $0.18$
& $0.7$ & $0.05$ \\
2
& $1.0$ & $0.007$
& $0.68$ & $0.004$
& $7.9$ & $0.22$
& $8.0$ & $0$ 
& $2.8$ & $0.10$
& $0.7$ & $0.05$ \\
3
& $1.0$ & $0.011$
& $0.46$ & $0.005$
& $7.9$ & $0.21$
& $10.0$ & $0.04$ 
& $2.8$ & $0.10$
& $0.9$ & $0.1$ \\
4
& $1.0$ & $0.015$
& $0.36$ & $0.005$
& $8.0$ & $0.18$
& $10.7$ & $0.45$ 
& $2.8$ & $0.10$
& $0.9$ & $0.05$ \\
5
& $1.0$ & $0.02$
& $0.26$ & $0.005$
& $8.0$ & $0.16$
& $12$ & $0.17$ 
& $2.8$ & $0.10$
& $0.9$ & $0.05$ \\
\hline
\hline
\end{tabular}
\end{table}

\subsection{Estimation of the heat transfer coefficient}

The last case study concerns the estimation of the heat transfer coefficient $h_{\,L}\,$. The identifiability of this parameter is first proven for the DF model. 
\begin{proposition}
The parameter $h_{\,L}$ is identifiable in Equation~\eqref{eq:BC_L}.
\end{proposition}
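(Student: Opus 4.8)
The plan is to repeat, now for the \textsc{Robin} condition~\eqref{eq:BC_L}, the injectivity argument already carried out for $c$ and $k\,$. First I would take an observable $T\,\bigl(\,-\,,\,-\,\bigr)$ satisfying, at $x \egal 0\,$,
\begin{align*}
k \cdot \pd{T}{n} \egal h_{\,L} \cdot \Bigl(\, T \moins T_{\,\infty\,,\,L} \,\Bigr) \,,
\end{align*}
and a second observable $T^{\,\prime}\,$, obtained with another coefficient $h_{\,L}^{\,\prime}\,$, satisfying the same relation with $h_{\,L}$ replaced by $h_{\,L}^{\,\prime}\,$. Assuming $T \, \equiv \, T^{\,\prime}$ gives $\pd{T}{n} \, \equiv \, \pd{T^{\,\prime}}{n}$ at $x \egal 0\,$; subtracting the two boundary relations then yields
\begin{align*}
\Bigl(\, h_{\,L} \moins h_{\,L}^{\,\prime} \,\Bigr) \cdot \Bigl(\, T \moins T_{\,\infty\,,\,L} \,\Bigr) \egal 0 \,, \qquad x \egal 0 \,.
\end{align*}

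From this I would conclude $h_{\,L} \, \equiv \, h_{\,L}^{\,\prime}\,$ --- hence that $h_{\,L}$ is SGI --- provided the surface-to-air temperature difference $T\,(\,0\,,\,-\,) \moins T_{\,\infty\,,\,L}\,(\,-\,)$ is not identically zero on $\bigl[\,0\,,\,t_{\,\fin}\,\bigr]\,$. This plays exactly the role of the non-degeneracy condition $\bigl(\, T_{\,3} \moins 2 \cdot T_{\,2} \plus T_{\,1} \,\bigr) \, \neq \, 0$ used in the RC proof for $k\,$: it is what allows one to divide the last displayed identity by the factor multiplying $\bigl(\, h_{\,L} \moins h_{\,L}^{\,\prime} \,\bigr)\,$. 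In the present setting it holds because $T_{\,\infty\,,\,L}$ is a genuinely time-varying forcing that the wall surface cannot track instantaneously, so the interfacial heat flux $k \cdot \pd{T}{n}$ at $x \egal 0\,$, equal to $h_{\,L} \cdot \bigl(\, T \moins T_{\,\infty\,,\,L} \,\bigr)\,$, does not vanish identically.

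I expect no real obstacle here: the algebra is a single subtraction, and the only point needing comment is the non-vanishing of that surface-to-air difference, which is dispatched in one line by appealing to the chosen boundary data (or, abstractly, by noting that for a non-constant observable the map $h_{\,L} \mapsto T$ defined by~\eqref{eq:BC_L} is injective). Once that is granted, $T \equiv T^{\,\prime} \Rightarrow h_{\,L} \equiv h_{\,L}^{\,\prime}$ is immediate, which establishes the SGI property of $h_{\,L}\,$.
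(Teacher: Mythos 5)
Your argument is essentially the paper's own proof: take two observables produced with coefficients $h_{\,L}$ and $h_{\,L}^{\,\prime}\,$, use $T \, \equiv \, T^{\,\prime}$ to identify the normal derivatives at $x \egal 0\,$, and subtract the two \textsc{Robin} conditions. Your write-up is in fact slightly more careful than the printed one, which records the residual as $\bigl(\, h_{\,L} \moins h_{\,L}^{\,\prime} \,\bigr) \cdot T \egal 0$ where the factor should be $T \moins T_{\,\infty\,,\,L}\,$, and which passes silently over the non-degeneracy requirement that this surface-to-air difference not vanish identically --- a point you correctly flag and justify.
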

\begin{proof}
We assume an observable $T\,\bigl(\,-\,,\,-\bigr)$ obtained for the model:
\begin{align}
\label{eq:SGI_heat_eq_h}
k \cdot \pd{T}{n} \egal h_{\,L} \cdot \Bigl(\, T \moins T_{\,\infty\,,\,L} \,\Bigr)\,.
\end{align}
Another observable, denoted with a superscript $^{\,\prime}\,$, obtained with another parameter $k^{\,\prime}$ is detained:
\begin{align}
\label{eq:SGI_heat_eq_h_prime}
k \cdot \pd{T^{\,\prime}}{n} \egal h_{\,L}^{\,\prime} \cdot \Bigl(\, T^{\,\prime} \moins T_{\,\infty\,,\,L} \,\Bigr)\,.
\end{align}
If $T \, \equiv \, T^{\,\prime}$ then $\displaystyle \pd{T}{n} \, \equiv \, \pd{T^{\,\prime}}{n}\,$. Thus, from Equations~\eqref{eq:SGI_heat_eq_h} and \eqref{eq:SGI_heat_eq_h_prime}, we obtain:
\begin{align*}
\Bigl(\, h_{\,L} \moins h_{\,L}^{\,\prime} \,\Bigr) \cdot T \egal 0 \,.
\end{align*}
Thus, $h_{\,L} \, \equiv \, h_{\,L}^{\,\prime}$ and parameter $h_{\,L}$ is SGI. 
\end{proof}
Secondly, the identifiability is proven for the RC model. \\~
\begin{proposition}
The parameter $h_{\,L}$ is identifiable in Equation~\eqref{eq:RC_model_BC}.
\end{proposition}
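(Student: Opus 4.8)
The plan is to reuse the template of the two preceding SGI proofs, now applied to the left-boundary relation~\eqref{eq:RC_model_BC_L}. First I would assume an observable $T$ produced by the RC model, whose trace $T_{\,1}$ and internal node $T_{\,2}$ satisfy
\begin{align*}
\frac{k}{\ell} \cdot \Bigl(\, T_{\,2} \moins T_{\,1} \,\Bigr) \egal h_{\,L} \cdot \Bigl(\, T_{\,1} \moins T_{\,\infty\,,\,L} \,\Bigr)\,,
\end{align*}
and a second observable $T^{\,\prime}$, obtained with another value $h_{\,L}^{\,\prime}$ of the heat transfer coefficient, satisfying the analogous relation built from $T_{\,1}^{\,\prime}\,$, $T_{\,2}^{\,\prime}$ and $h_{\,L}^{\,\prime}\,$.

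Then I would invoke the hypothesis $T \, \equiv \, T^{\,\prime}\,$. Since $T_{\,1}$ and $T_{\,2}$ are only values of the common temperature field (at the face $x \egal 0$ and inside the cell $\mathcal{C}$), this immediately gives $T_{\,1} \, \equiv \, T_{\,1}^{\,\prime}$ and $T_{\,2} \, \equiv \, T_{\,2}^{\,\prime}\,$. Subtracting the two boundary relations cancels the left-hand sides and produces
\begin{align*}
\Bigl(\, h_{\,L} \moins h_{\,L}^{\,\prime} \,\Bigr) \cdot \Bigl(\, T_{\,1} \moins T_{\,\infty\,,\,L} \,\Bigr) \egal 0 \,.
\end{align*}

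The single point that needs a word of care — the analogue of the remark $\bigl(\, T_{\,3} \moins 2 \cdot T_{\,2} \plus T_{\,1} \,\bigr) \, \neq \, 0$ used in the conductivity proof — is that the scalar factor $T_{\,1} \moins T_{\,\infty\,,\,L}$ is not identically zero on the observation interval. This is physically transparent: a vanishing surface temperature difference would mean a vanishing heat flux into the wall, which is incompatible with the non-constant forcing $T_{\,\infty\,,\,L}\,(\,-\,)$ prescribed in the case studies. Dividing out this factor yields $h_{\,L} \, \equiv \, h_{\,L}^{\,\prime}\,$, hence $h_{\,L}$ is SGI in the RC model. I expect this non-degeneracy observation to be the only step requiring justification; the remainder is the same algebraic cancellation as in the previous propositions, so no genuine obstacle is anticipated.
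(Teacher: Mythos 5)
Your proof is correct and follows essentially the same route as the paper's: subtract the two left-boundary relations under the hypothesis $T \, \equiv \, T^{\,\prime}$ and cancel the common factor. In fact your version is slightly more careful than the paper's, which writes the residual factor as $T_{\,1}$ rather than $T_{\,1} \moins T_{\,\infty\,,\,L}$ and omits the non-degeneracy remark that this difference is not identically zero; your explicit justification of that point is a welcome addition.
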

\begin{proof}
We assume an observable $T\,\bigl(\,-\,,\,-\,\bigr)$ obtained from the RC model:
\begin{align}
\label{eq:SGI_RC_model_h}
\frac{k}{e} \cdot \Bigl(\, T_{\,2} \moins T_{\,1} \,\Bigr) &  \egal h_{\,L} \cdot \Bigl(\, T_{\,1} \moins T_{\,\infty\,,\,L} \,\Bigr) \,.
\end{align}
Another observable, denoted with a superscript $^{\,\prime}\,$, obtained with another parameter $k^{\,\prime}$ holds:
\begin{align}
\label{eq:SGI_RC_model_h_prime}
\frac{k}{e} \cdot \Bigl(\, T_{\,2}^{\,\prime} \moins T_{\,1}^{\,\prime} \,\Bigr)&  \egal h_{\,L}^{\,\prime} \cdot \Bigl(\, T_{\,1}^{\,\prime} \moins T_{\,\infty\,,\,L} \,\Bigr) \,.
\end{align}
If $T \, \equiv \, T^{\,\prime}$ then from Equations~\eqref{eq:SGI_RC_model_h} and \eqref{eq:SGI_RC_model_h_prime}, one obtain:
\begin{align*}
\Bigl(\, h_{\,L} \moins h_{\,L}^{\,\prime} \,\Bigr) \cdot T_{\,1} 
\end{align*}
and $h_{\,L} \, \equiv \, h_{\,L}^{\,\prime}$ and parameter $h_{\,L}$ is SGI in the RC model. 
\end{proof}

The properties are the one from material $3$ identified in Table~\ref{tab:material}. The $N_{\,s}$ samples of observations are generated for four cases identified in Table~\ref{tab:heat_coefficient}. The chosen real values for $h_{\,L}$ corresponds to classical one encountered in the literature for building physics applications. The initial guess used in the algorithm to estimate the unknown parameter is $h_{\,L\,,\,\apr} \egal 0.1 \cdot h_{\,L\,,\,\real} \,$.

The expectation of the estimated parameter is shown in Figure~\ref{fig:h_P_fh} for the four cases. Again, the estimation performed with the DF model is accurate. The expectation of the ratio $\displaystyle \frac{h_{\,L\,,\,\circ}}{h_{\,L\,,\,\real}}$ scales with $1$ for all values of $h_{\,L\,,\,\real}\,$. When using the RC model, the estimation lacks of reliability. Particularly, for small values of heat transfer coefficient $h_{\,L\,,\,\real} \egal 0.5 \ \mathsf{W\,/\,(m^{\,2}\cdot K)}\,$, the estimated parameter is more than five time higher than the real one. For higher values of heat transfer coefficient $h_{\,L\,,\,\real} \egal 0.5 \ \mathsf{W\,/\,(m^{\,2}\cdot K)}\,$, the error on the estimation reaches $\simeq \, 30 \% \,$. 
Figures~\ref{fig:h_Nk_fh} and \ref{fig:h_cpu_fh} show the variation of the number of iterations and the CPU time for the algorithms to estimate the unknown parameter. The number of iterations remains stable for the RC approach while it increases for the DF one. As expected, the DF model has a higher computational cost, increasing the time required to solve the parameter estimation problem. Detailed results are also provided in Table~\ref{tab:heat_coefficient}. Compared to previous case study, the computational cost is divided by two for the algorithm using the DF model. Indeed, the algorithm requires fewer iterations to converge. 

The comparison between the numerical predictions and the experimental observations is provided in Figures~\ref{fig:h_T_all_ft_h1} and \ref{fig:h_T_all_ft_h3}. Important discrepancies are noted for the predictions using the RC model with the estimated heat transfer coefficient $h_{\,L\,,\,\circ}\,$. For the case $1\,$, the error can reach $\simeq \, 4 \ \mathsf{^{\,\circ}C}\,$. For lower values of heat transfer coefficient $h_{\,L\,,\,\real} \egal 0.5 \ \mathsf{W\,/\,(m^{\,2}\cdot K)}\,$, the boundary conditions at $x \egal 0$ tends to be adiabatic. The RC model is completely unreliable to predict the physical phenomena for such cases. For larger heat transfer coefficient values, the discrepancies are lower but the predictions of the model are still not satisfactory. The speed of convergence of the algorithm is illustrated in Figures~\ref{fig:h_crit1_fk} and \ref{fig:h_crit2_fk}. As for the previous cases, only the criteria $\gamma_{\,2}$ on the magnitude of the changes in the cost function is reached for the algorithm using the RC approach. It is another indication of the poor accuracy of the method.

\begin{table}
\centering
\caption{Results for the estimation of the unknown thermal conductivity $h_{\,L} \,$.}
\label{tab:heat_coefficient}
\setlength{\extrarowheight}{.5em}
\begin{tabular}[l]{@{} c|c|cc|cc|cc|cc|cc|cc}
\hline
\hline
\textit{Case}
& \textit{Real value} 
& \multicolumn{4}{c|}{\textit{Ratio $\displaystyle \frac{h_{\,L\,,\,\circ}}{h_{\,L\,,\,\real}}$}}
& \multicolumn{4}{c|}{\textit{Number of iterations $N_{\,m}$}}
& \multicolumn{4}{c}{\textit{Computational time $t_{\,\mathrm{CPU}} \ \unit{s}$}} \\[8pt]
& $h_{\,L\,,\,\real} \ \unitt{W}{m^{\,2} \cdot K}$
& \multicolumn{2}{c|}{\textit{DF model}}
& \multicolumn{2}{c|}{\textit{RC model}}
& \multicolumn{2}{c|}{\textit{DF model}}
& \multicolumn{2}{c|}{\textit{RC model}}
& \multicolumn{2}{c|}{\textit{DF model}}
& \multicolumn{2}{c}{\textit{RC model}} \\
1 
& $0.5$
& $1.0$ & $0.07$
& $5.5$ & $0.06$
& $4$ & $0.06$
& $6$ & $0$ 
& $1.1$ & $0.08$
& $0.7$ & $0.05$ \\
2 
& $5$ 
& $1.0$ & $0.01$
& $1.05$ & $0.01$
& $5$ & $0.07$
& $6$ & $0$ 
& $1.1$ & $0.08$
& $0.6$ & $0.05$ \\
3 
& $10$
& $1.0$ & $0.01$
& $0.82$ & $0.01$
& $5.9$ & $0.32$
& $6$ & $0$ 
& $1.3$ & $0.08$
& $0.6$ & $0.05$ \\
4 
& $15$ 
& $1.0$ & $0.01$
& $0.74$ & $0.06$
& $6$ & $0.13$
& $6.5$ & $0.5$ 
& $1.3$ & $0.05$
& $0.7$ & $0.05$\\
\hline
\hline
\end{tabular}
\end{table}

\begin{figure}
\centering
\includegraphics[width=.45\textwidth]{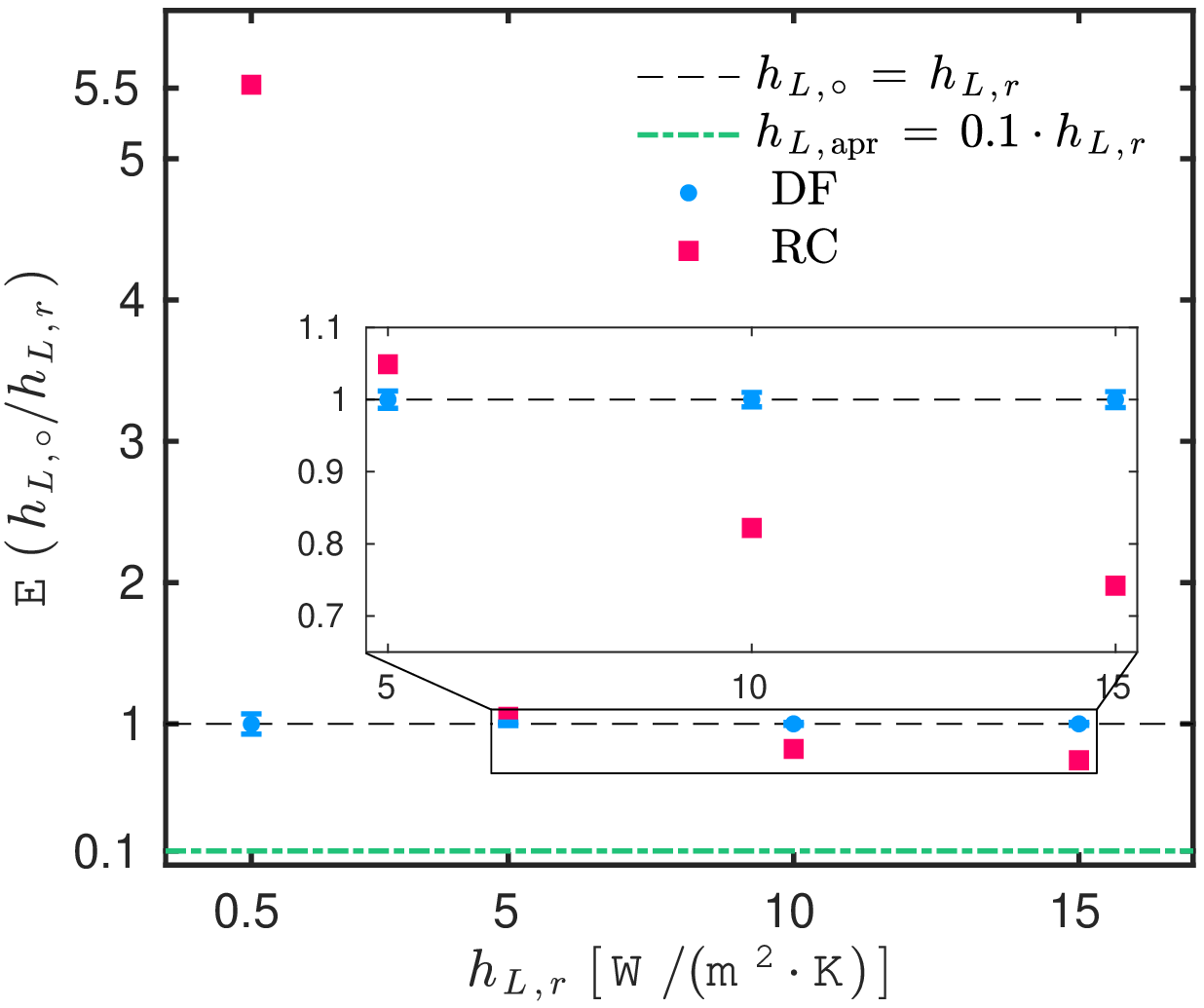}
\caption{Comparison of the expectation of the estimated parameter $h_{\,L\,,\,\circ}$ with the real parameter $h_{\,L\,,\,\real}\,$.}
\label{fig:h_P_fh}
\end{figure}

\begin{figure}
\centering
\subfigure[\label{fig:h_Nk_fh}]{\includegraphics[width=.45\textwidth]{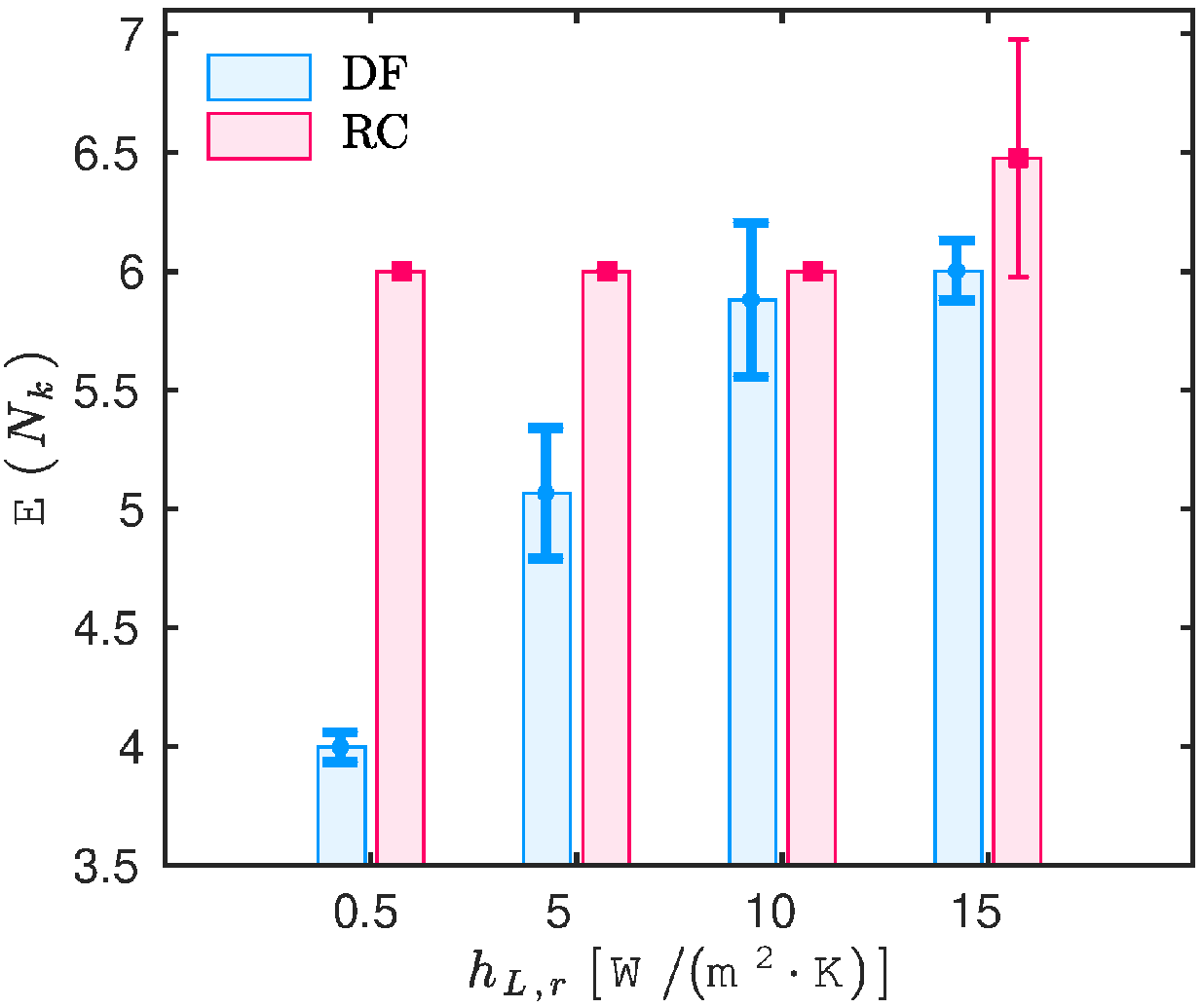}}  \hspace{0.2cm}
\subfigure[\label{fig:h_cpu_fh}]{\includegraphics[width=.45\textwidth]{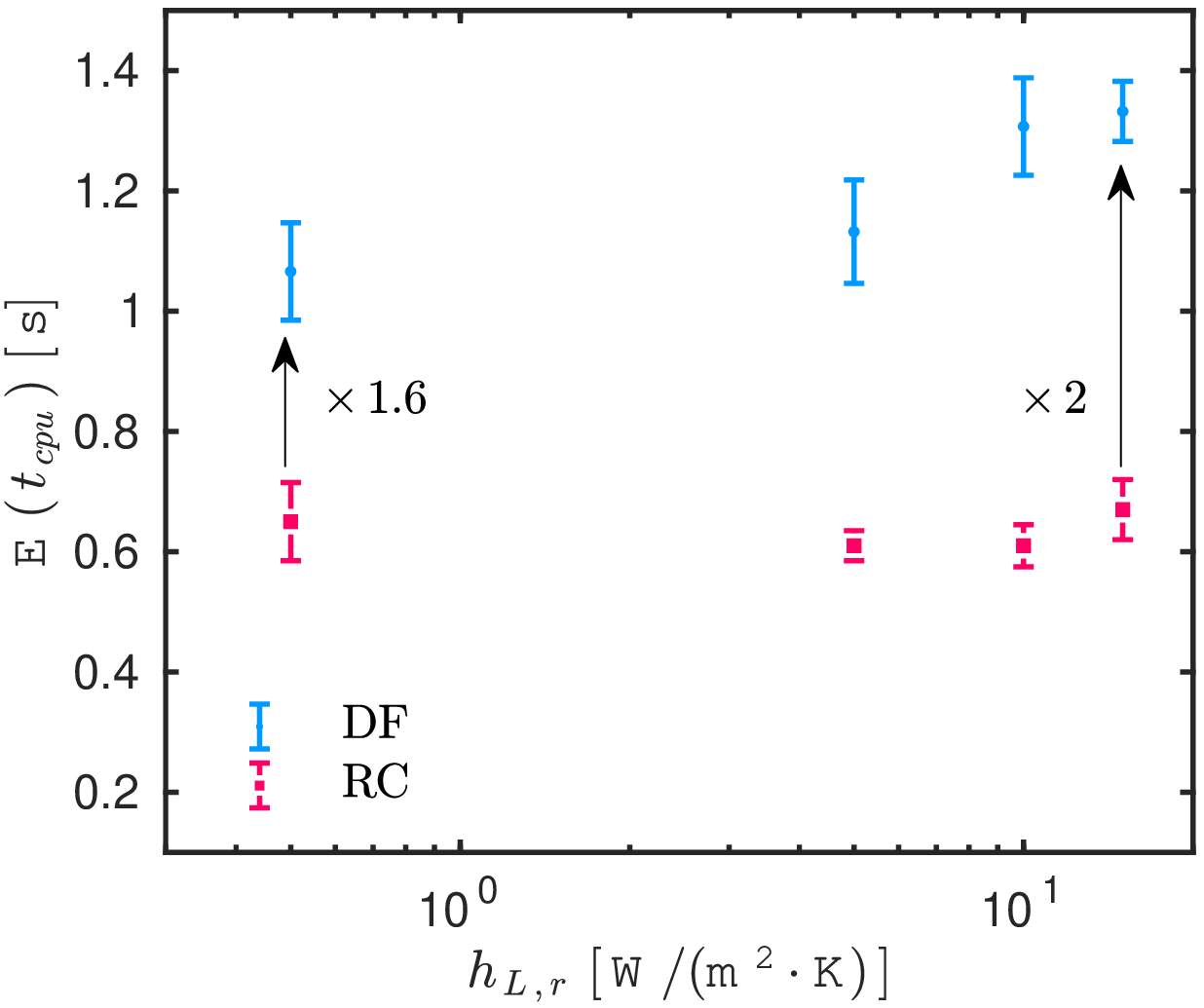}}
\caption{Variation of the expectation of the number of iteration $N_{\,m}$ \emph{(a)} and the computational time $t_{\,\mathrm{cpu}}$ \emph{(b)} for the algorithm to estimate the unknown parameter $h_{\,L}$ for the four cases.}
\end{figure}

\begin{figure}
\centering
\subfigure[\label{fig:h_T_all_ft_h1}]{\includegraphics[width=.45\textwidth]{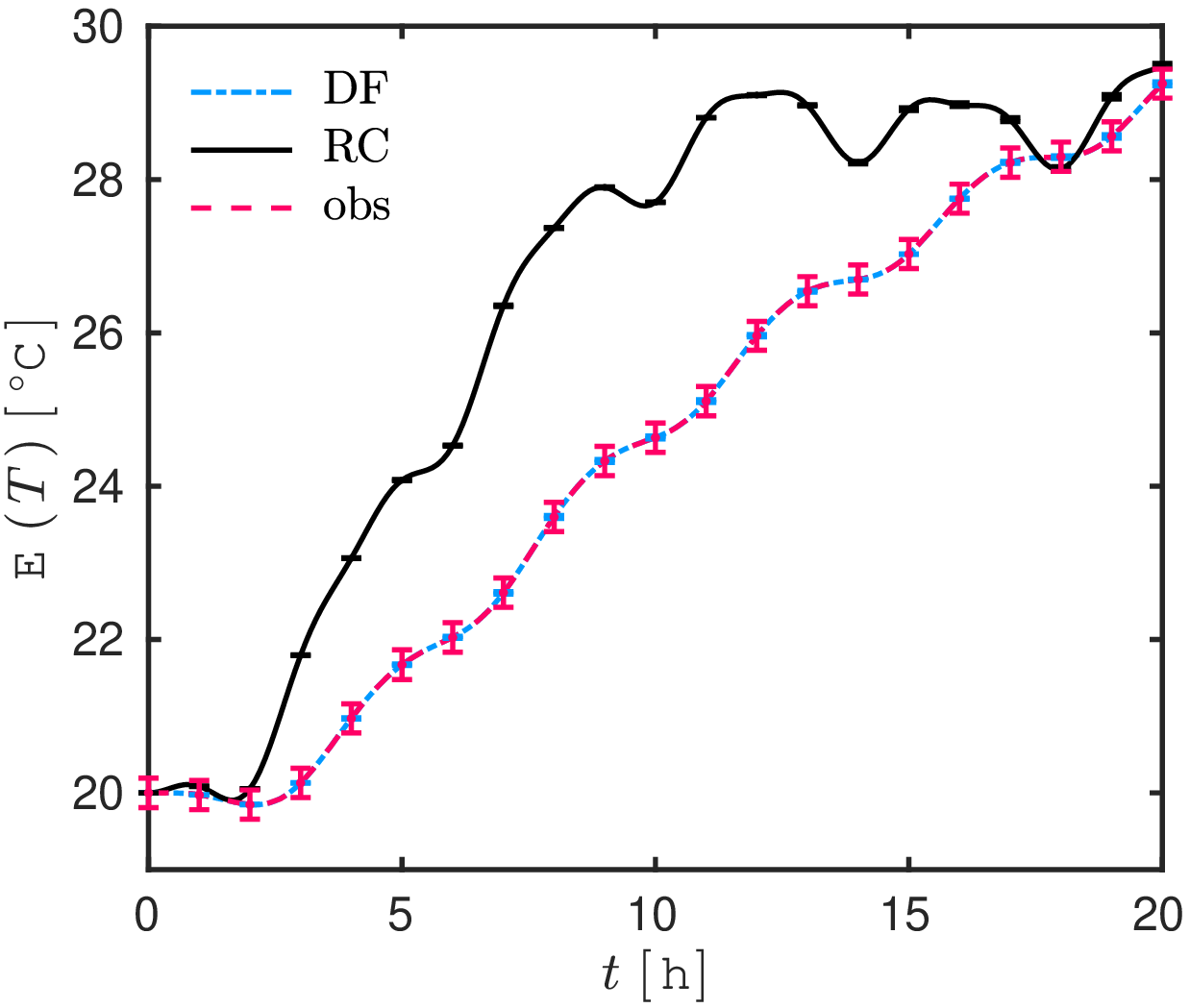}}  \hspace{0.2cm}
\subfigure[\label{fig:h_T_all_ft_h3}]{\includegraphics[width=.45\textwidth]{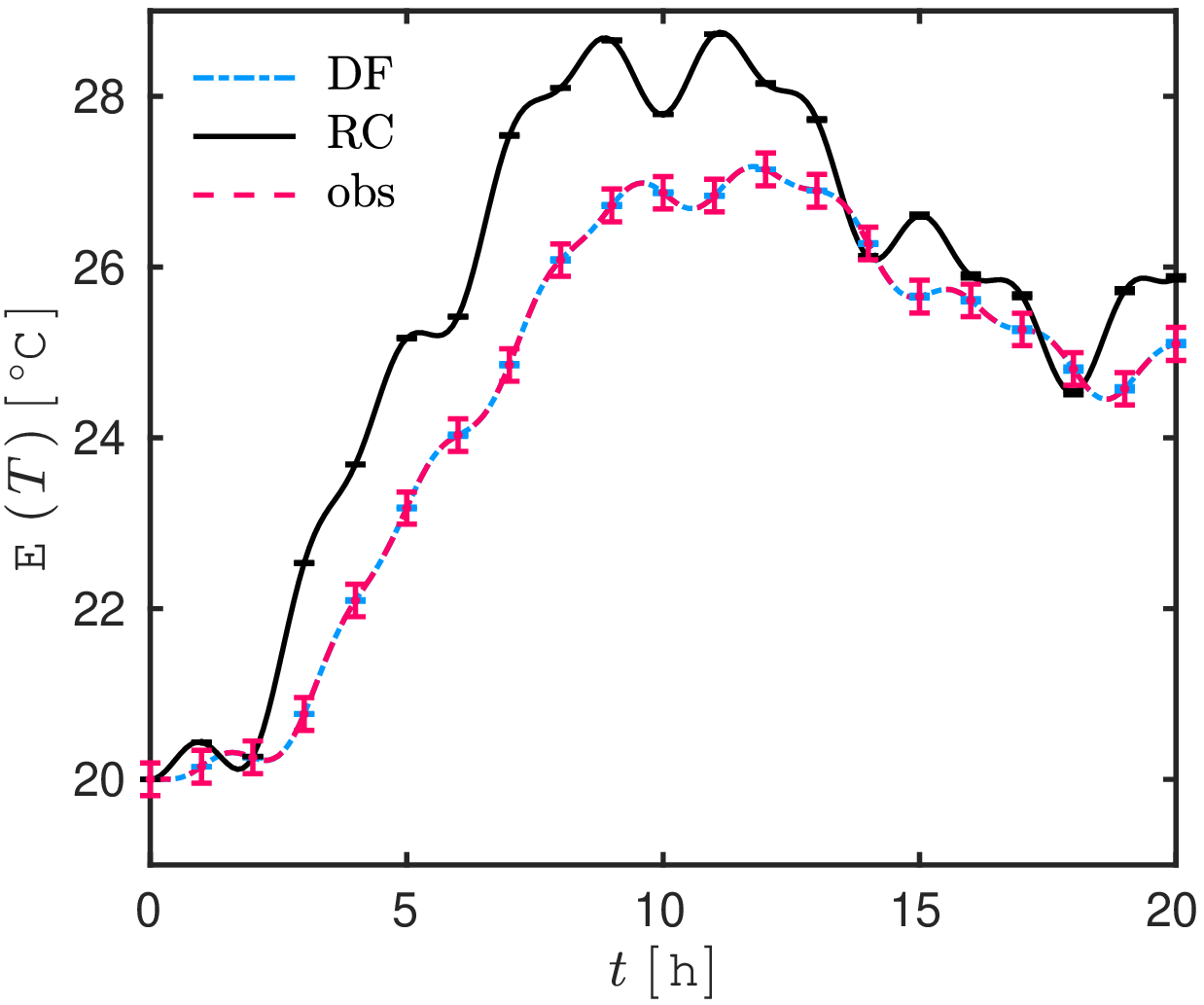}}
\caption{Time evolution of the temperature at $x \egal x_{\,\obs} \egal 11 \ \mathsf{cm}$ for cases $1$ \emph{(a)} and $3$ \emph{(b)} computed with the numerical model for $h_{\,L} \egal h_{\,L\,,\,\circ}\,$.}
\end{figure}

\begin{figure}
\centering
\subfigure[\label{fig:h_crit1_fk}]{\includegraphics[width=.45\textwidth]{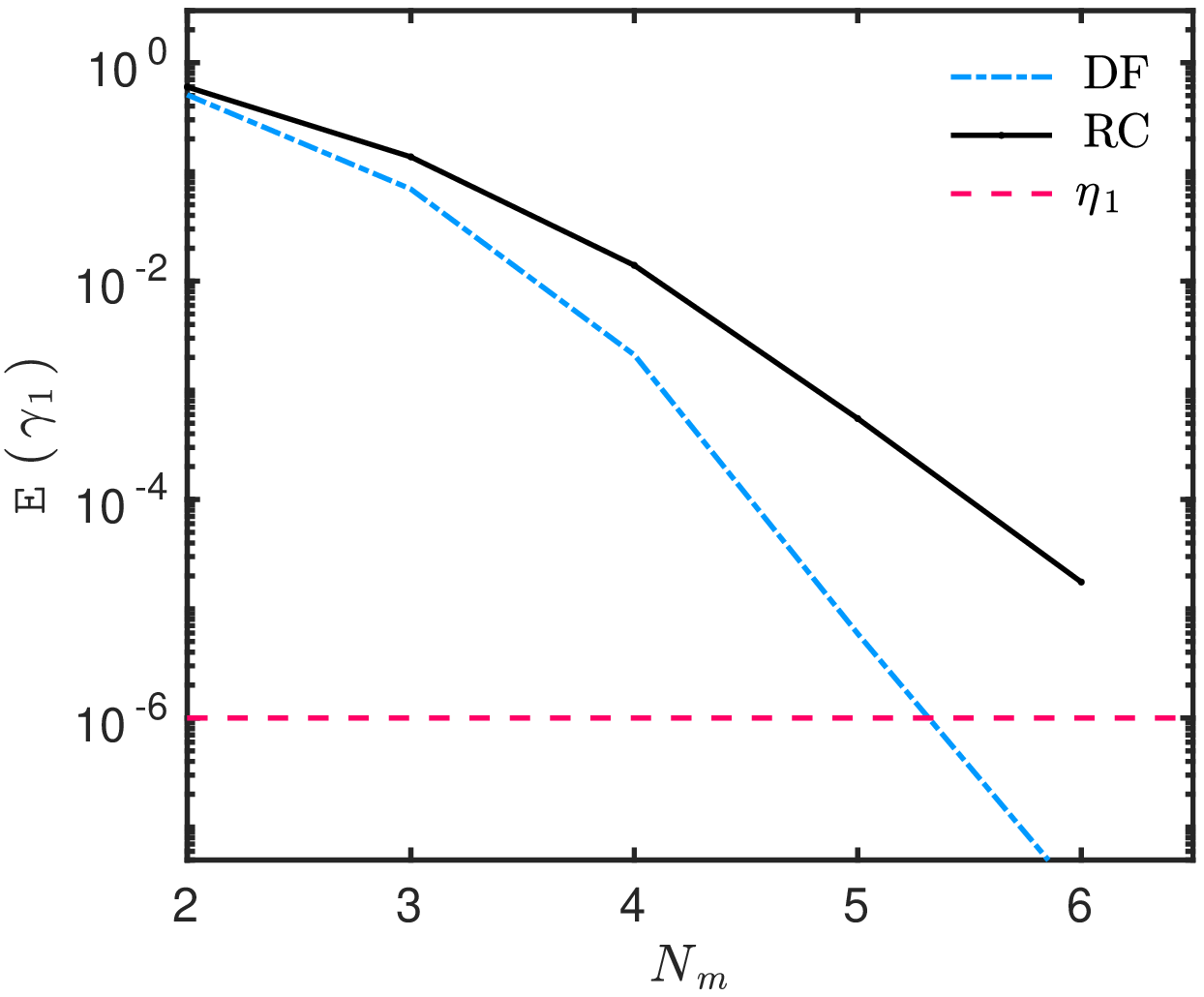}}  \hspace{0.2cm}
\subfigure[\label{fig:h_crit2_fk}]{\includegraphics[width=.45\textwidth]{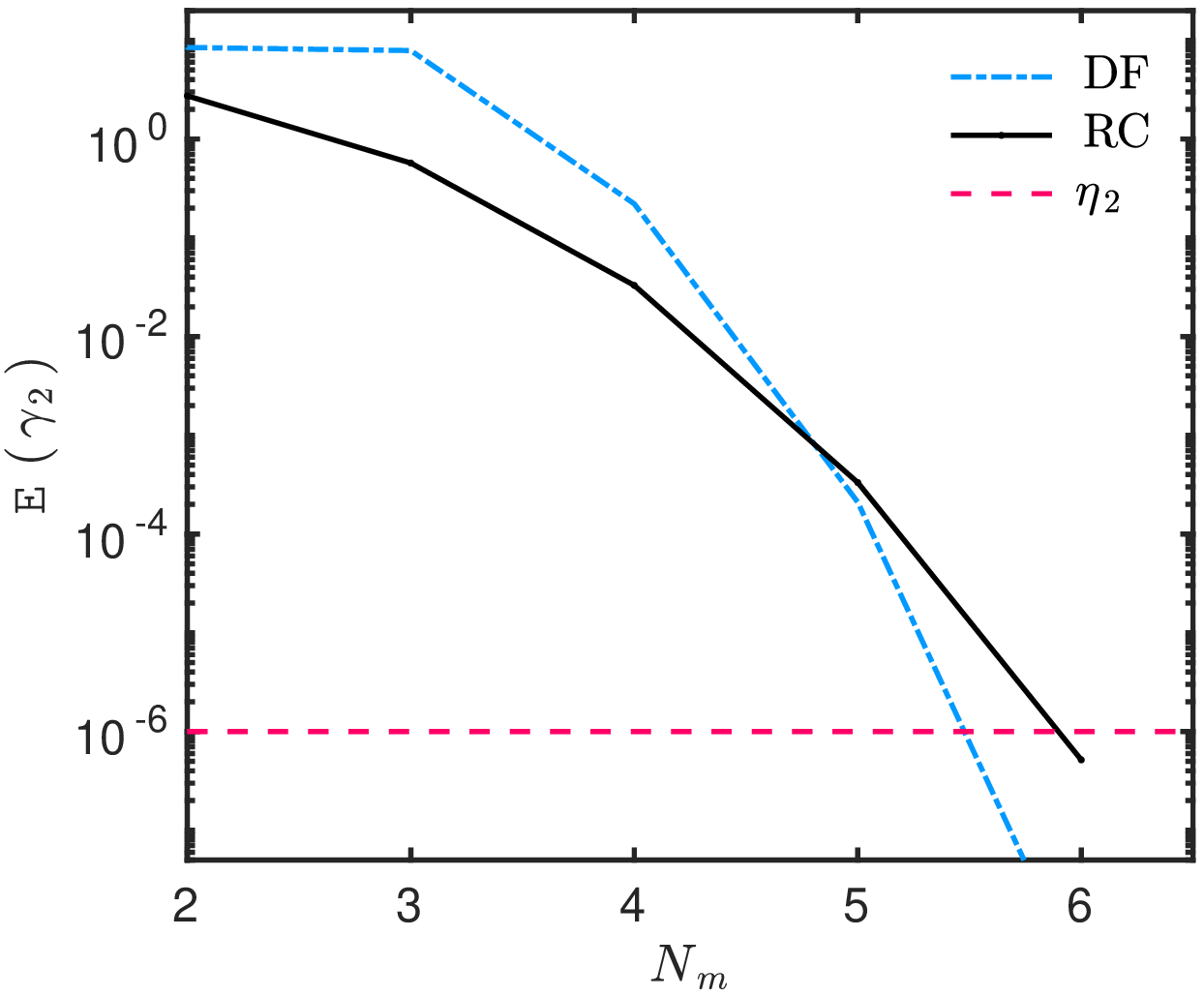}}
\caption{Variation of the expectation of the convergence criteria $\gamma_{\,1}$ \emph{(a)} and $\gamma_{\,2}$ \emph{(b)} for the estimation of the unknown parameter $h_{\,L}$ for the case $3$.}
\end{figure}

\section{Conclusion}
\label{sec:conclusion}

In building physics, it is of capital importance to build reliable models to simulate the physical phenomena of heat losses through the walls. The fidelity of a model can be evaluated by comparing the numerical predictions with experimental observations. The reliability can also be assessed by the robustness of the model to estimate accurate unknown parameters using given observations. This article deals with this second aspect of reliability. 

Two main mathematical models are proposed in the literature for heat losses through building wall. The first one, denoted by DF, is based on the heat diffusion equation combined with the \DF ~scheme to build the numerical model. The second one is the so-called lumped RC model which approximates the diffusion processes by an ordinary differential equation. Within this approach, only three temperatures are evaluated in the wall. Section~\ref{sec:mathematical_model} presents the two mathematical models used to obtain the solution of the direct problem when estimating the parameter. Then, in Section~\ref{sec:evaluating_reliability}, the methodology to evaluate the reliability is detailed. First, samples of observations are generated using a pseudo-spectral numerical method for the heat diffusion equation and a known parameter. A noise is then added to the numerical results to generate experimental observations \emph{in silico}. The second step consists in solving the parameter estimation problem for each sample of observations using both mathematical model. The main criterion to evaluate the reliability is the accuracy of recovering the unknown parameter. Secondary criteria focus on computational time and number of iterations to solve the inverse problem. 

In Section~\ref{sec:case_studies}, three case studies are considered for the estimation of (\emph{i}) the heat capacity, (\emph{ii}) the thermal conductivity or (\emph{iii}) the heat transfer coefficient at the interface between the wall and the ambient air. For each case, a total of $10^{\,4}$ samples of observations are generated. The parameter estimation problem is then solved with each mathematical model. The results highlight a very satisfactory robustness of the DF approach to estimate the unknown parameter. For each case, the parameter is recovered with $100 \%$ accuracy. On the other hand, the reliability of the RC model is not satisfactory. For the estimation of the heat capacity or the thermal conductivity, the error can reach $40 \%$ or $80\% \,$, respectively. For the estimation of the heat transfer coefficient at the interface between the ambient air and the material, the relative error goes up to $450 \%$ for small magnitude of the coefficient. The accuracy of the estimation is unacceptable for the RC approach revealing a lack of reliability in the mathematical model. For all cases, the algorithm using the DF approach has a higher computational time even if it requires less iterations to converge. The computational cost of the DF model is a reasonable price to pay to have a reliable model to estimate parameters. Beyond these results, it should be recalled the importance of having confidence in the estimated thermal conductivity $k\,$, heat capacity $c\,$ or surface heat transfer coefficient. Once estimated, these parameters are used in computational tools to perform direct simulations and evaluate building energy efficiency, particularly in the context of thermal regulations. Thus, if the parameters values are not reliable, the predictions of building energy requirements might be inaccurate.

As a conclusion, the choice of the mathematical model to simulate the heat losses through a building wall has to be done carefully to be able to rely later on this model predictions. Further studies will investigate the reliability of more complex mathematical models involving coupled heat and mass transfers. Indeed, the latent effects impact strongly the prediction of the building energy efficiency as it was demonstrated in \cite{Mendes_2017}.

\section*{Nomenclature}

\begin{tabular}[l]{@{} cll}
\hline
\hline
\multicolumn{3}{c}{\emph{Latin letters}} \\
$\Bi$ & \textsc{Biot} number & $\unit{\emptyset}$ \\
$c$ & specific heat capacity & $\unitt{W}{kg \cdot K}$ \\
$e$ & length & $\unit{m}$ \\
$\Fo$ & \textsc{Fourier} number & $\unit{\emptyset}$ \\
$h$ & surface heat transfer coefficient & $\unitt{W}{m^{\,2} \cdot K}$ \\
$k$ & thermal conductivity & $\unitt{W}{m \cdot K}$ \\
$N$ & number & $\unit{\emptyset}$ \\
$L$ &  length & $\unit{m}$ \\
$t$ & time coordinate & $\unit{s}$ \\
$T$ & temperature & $\unit{K}$ \\
$u$ & dimensionless temperature & $\unit{\emptyset}$ \\
$x$ & space coordinate & $\unit{m}$ \\
\hline
\hline
\end{tabular}

\hspace{2cm} \\~

\begin{tabular}[l]{@{} cl}
\hline
\hline
\multicolumn{2}{c}{\emph{Subscript or super script}} \\
$\infty$ & ambient air \\
$\circ$ & estimated parameter \\
$\star$ & dimensionless variable \\
$\apr$ & \emph{a priori} value parameter \\
$\obs$ & experimental observation \\
$\real$ & real parameter \\
$\reff$ & reference value \\
\hline
\hline
\end{tabular}

\section*{Acknowledgments}

The authors acknowledge the Junior Chair Research program ``Building performance assessment, evaluation and enhancement'' from the University of Savoie Mont Blanc in collaboration with The French Atomic and Alternative Energy Center (CEA) and Scientific and Technical Center for Buildings (CSTB). The authors also would like to acknowledge Dr. Celine Labart (LAMA UMR 5127, University Savoie Mont Blanc, France) for her precious discussions on numerical matters. 

\bibliographystyle{unsrt}  
\bibliography{references}

\appendix

\section{Sensitivity equations}
\label{app:sensitivity_equations}

The sensitivity equations for each direct model are detailed in this section. From these equations, numerical models are built to compute the sensitivity functions using the same approach as for the direct model . More specifically, for the DF model, the numerical scheme for the sensitivity equations is built by differentiating the fully discrete equation~\eqref{eq:DF} to the unknown parameter. Thus, the direct extension of the \DF ~scheme for the sensitivity equations is obtained. For the lumped RC model, the numerical scheme is constructed by differentiating each term of the matrix formulation in equation~\eqref{eq:RC_NM} with respect to to the unknown parameter. So the computation of the sensitivity function is also based on explicit \textsc{Euler} time scheme.

\subsection{Estimation of the volumetric heat capacity}

\subsubsection{The DF model}

We define the sensitivity function of the dimensionless temperature relatively to the dimensionless heat capacity by:
\begin{align*}
\Theta \,:\, \bigl[\, 0 \,,\, 1 \,\bigr] \, \times \, \bigl[\, 0 \,,\, t_{\,\fin}^{\,\star} \,\bigr] & \longrightarrow \, \mathbb{R} \,, \\[4pt]
\bigl(\,x^{\,\star} \,,\, t^{\,\star}\,\bigr) & \longmapsto \, \pd{u}{c^{\,\star}}\bigl(\,x^{\,\star} \,,\, t^{\,\star}\,\bigr) \,.
\end{align*}
It is computed by solving the following differential equation by differentiating Eq.~\eqref{eq:heat_diffusion_dimless} relatively to $c^{\,\star}\,$:
\begin{align*}
c^{\,\star} \cdot \pd{\Theta}{t^{\,\star}} \egal \Fo \cdot k^{\,\star} \cdot \pd{^{\,2} \Theta}{x^{\,\star\,2}}  \moins \pd{u}{t^{\,\star}} \,,
\end{align*}
with the following boundary conditions:
\begin{align*}
k^{\,\star} \cdot \pd{\Theta}{x^{\,\star}} & \egal \Bi \cdot h^{\,\star}_{\,L} \cdot \Theta \,, && x^{\,\star} \egal 0 \,, \\[4pt]
k^{\,\star} \cdot \pd{\Theta}{x^{\,\star}} & \egal - \, \Bi \cdot h^{\,\star}_{\,R} \cdot \Theta \,, && x^{\,\star} \egal 1  \,,
\end{align*}
and the initial condition:
\begin{align*}
\Theta \egal 0 \,.
\end{align*}
The solution of this problem gives the sensitivity of the dimensionless field $u$ with respect to the heat capacity for the DF model.

\subsubsection{The RC ~model}

We define the sensitivity function of the temperature relatively to the volumetric heat capacity:
\begin{align*}
X_{\,j} \,:\, \bigl[\, 0 \,,\, t_{\,\fin} \,\bigr] & \longrightarrow \, \mathbb{R} \,,  && j \, \in \, \bigl\{\, 1 \,,\, 2 \,,\, 3\,\bigr\}  \,, \\[4pt]
t & \longmapsto \, \pd{T_{\,j}}{c}\,\bigl(\,t\,\bigr) \,.
\end{align*}
Three equations are obtained by differentiating Eqs.~\eqref{eq:RC_model} and \eqref{eq:RC_model_BC} with respect to $c\,$:
\begin{align*}
e^{\,2} \cdot  c \cdot \od{X_{\,2}}{t} & \egal k \cdot \biggl(\, X_{\,3} \moins 2 \cdot X_{\,2} \plus X_{\,1} \,\biggr)  \moins e^{\,2} \cdot \od{T_{\,2}}{t} \,, \\[4pt]
\frac{k}{e} \cdot \Bigl(\, X_{\,2} \moins X_{\,1} \,\Bigr) &  \egal h_{\,L} \cdot X_{\,1} \,, \\[4pt]
\frac{k}{e} \cdot \Bigl(\, X_{\,3} \moins X_{\,2} \,\Bigr) &  \egal - \, h_{\,R} \cdot X_{\,3} \,.
\end{align*}
The initial condition is:
\begin{align*}
X_{\,j} \egal 0 \,, \qquad j \, \in \, \bigl\{\, 1 \,,\, 2 \,,\, 3\,\bigr\} \,.
\end{align*}
The solution gives the sensitivity of the temperature with respect to the heat capacity for the RC model.

\subsection{Estimation of the thermal conductivity}

\subsubsection{The DF model}

We define the sensitivity function of the dimensionless temperature relatively to the dimensionless thermal conductivity by:
\begin{align*}
\Theta \,:\, \bigl[\, 0 \,,\, 1 \,\bigr] \, \times \, \bigl[\, 0 \,,\, t_{\,\fin}^{\,\star} \,\bigr] & \longrightarrow \, \mathbb{R} \,, \\[4pt]
\bigl(\,x^{\,\star} \,,\, t^{\,\star}\,\bigr) & \longmapsto \, \pd{u}{k^{\,\star}}\bigl(\,x^{\,\star} \,,\, t^{\,\star}\,\bigr) \,.
\end{align*}
It is computed by solving the following differential equation by differentiating Eq.~\eqref{eq:heat_diffusion_dimless} relatively to $k^{\,\star}\,$:
\begin{align*}
c^{\,\star} \cdot \pd{\Theta}{t^{\,\star}} \egal \Fo \cdot k^{\,\star} \cdot \pd{^{\,2} \Theta}{x^{\,\star \, 2}} \plus \Fo \cdot \pd{^{\,2} u}{x^{\,\star \, 2}} \,,
\end{align*}
with the following boundary conditions:
\begin{align*}
k^{\,\star} \cdot \pd{\Theta}{x^{\,\star}} & \egal \Bi \cdot h^{\,\star}_{\,L} \cdot \Theta \moins \pd{u}{x^{\,\star}} \,, && x^{\,\star} \egal 0 \,, \\[4pt]
k^{\,\star} \cdot \pd{\Theta}{x^{\,\star}} & \egal - \, \Bi \cdot h^{\,\star}_{\,R} \cdot \Theta  \moins \pd{u}{x^{\,\star}}\,, && x^{\,\star} \egal 1 \,, 
\end{align*}
and the initial condition:
\begin{align*}
\Theta \egal 0 \,.
\end{align*}
This problem enables to compute the sensitivity of the field $u$ with respect to the parameter $k^{\,\star}$ for the DF model. 

\subsubsection{The RC ~model}

We define the sensitivity function of the temperature relatively to the thermal conductivity:
\begin{align*}
X_{\,j} \,:\, \bigl[\, 0 \,,\, t_{\,\fin} \,\bigr] & \longrightarrow \, \mathbb{R} \,,  && j \, \in \, \bigl\{\, 1 \,,\, 2 \,,\, 3\,\bigr\}  \,, \\[4pt]
t & \longmapsto \, \pd{T_{\,j}}{k}\bigl(\,t\,\bigr) \,.
\end{align*}
Three equations are obtained by differentiating Eqs.~\eqref{eq:RC_model} and \eqref{eq:RC_model_BC} relatively to $k\,$:
\begin{align*}
e^{\,2} \cdot  c \cdot  \od{X_{\,2}}{t} & \egal k \cdot \biggl(\, X_{\,3} \moins 2 \cdot X_{\,2} \plus X_{\,1} \,\biggr)  \plus \biggl(\, T_{\,3} \moins 2 \cdot T_{\,2} \plus T_{\,1} \,\biggr) \,, \\[4pt]
\frac{k}{e} \cdot \Bigl(\, X_{\,2} \moins X_{\,1} \,\Bigr) &  \egal h_{\,L} \cdot X_{\,1} 
\moins \frac{1}{e} \cdot \Bigl(\, T_{\,2} \moins T_{\,1} \,\Bigr) \,, \\[4pt]
\frac{k}{e} \cdot \Bigl(\, X_{\,3} \moins X_{\,2} \,\Bigr) &  \egal - \, h_{\,R} \cdot X_{\,3} 
 \moins \frac{1}{e} \cdot \Bigl(\, T_{\,3} \moins T_{\,2} \,\Bigr)\,.
\end{align*}
The initial condition is:
\begin{align*}
X_{\,j} \egal 0 \,, \qquad j \, \in \, \bigl\{\, 1 \,,\, 2 \,,\, 3\,\bigr\} \,.
\end{align*}
The solution of this problem gives the sensitivity of temperature with respect to the thermal conductivity in the RC model. 

\subsection{Estimation of the left heat transfer coefficient}

\subsubsection{The DF model}

We define the sensitivity function of the dimensionless temperature relatively to the dimensionless heat transfer coefficient by:
\begin{align*}
\Theta \,:\, \bigl[\, 0 \,,\, 1 \,\bigr] \, \times \, \bigl[\, 0 \,,\, t_{\,\fin}^{\,\star} \,\bigr] & \longrightarrow \, \mathbb{R} \,, \\[4pt]
\bigl(\,x^{\,\star} \,,\, t^{\,\star}\,\bigr) & \longmapsto \, \pd{u}{h^{\,\star}_{\,L}}\bigl(\,x^{\,\star} \,,\, t^{\,\star}\,\bigr) \,.
\end{align*}
It is computed by solving the following differential equation by differentiating Eq.~\eqref{eq:heat_diffusion_dimless} relatively to $h^{\,\star}_{\,L}\,$:
\begin{align*}
c^{\,\star} \cdot \pd{\Theta}{t^{\,\star}} \egal \Fo \cdot k^{\,\star} \cdot \pd{^{\,2} \Theta}{x^{\,\star\,2}} \,,
\end{align*}
with the following boundary conditions:
\begin{align*}
k^{\,\star} \cdot \pd{\Theta}{x^{\,\star}} & \egal \Bi \cdot h^{\,\star}_{\,L} \cdot \Theta \plus \Bi \cdot u \,, && x^{\,\star} \egal 0 \\[4pt]
k^{\,\star} \cdot \pd{\Theta}{x^{\,\star}} & \egal - \, \Bi \cdot h^{\,\star}_{\,R} \cdot \Theta \,, && x^{\,\star} \egal 1 
\end{align*}
and the initial condition:
\begin{align*}
\Theta \egal 0 \,.
\end{align*}
It permits to compute the sensitivity of the field $u$ with respect to the parameter $h^{\,\star}_{\,L}$ for the DF model. 

\subsubsection{The RC ~model}

We define the sensitivity function of the temperature relatively to the heat transfer coefficient:
\begin{align*}
X_{\,j} \,:\, \bigl[\, 0 \,,\, t_{\,\fin} \,\bigr] & \longrightarrow \, \mathbb{R} \,,  && j \, \in \, \bigl\{\, 1 \,,\, 2 \,,\, 3\,\bigr\}  \,, \\[4pt]
t & \longmapsto \, \pd{T_{\,j}}{h_{\,L}}\,\bigl(\,t\,\bigr) \,.
\end{align*}
Three equations are obtained by differentiating Eqs.~\eqref{eq:RC_model} and \eqref{eq:RC_model_BC} relatively to $h_{\,L}\,$:
\begin{align*}
e^{\,2} \cdot  c \, \od{X_{\,2}}{t} & \egal k \cdot \biggl(\, X_{\,3} \moins 2 \cdot X_{\,2} \plus X_{\,1} \,\biggr) \,, \\[4pt]
\frac{k}{e} \cdot \Bigl(\, X_{\,2} \moins X_{\,1} \,\Bigr) &  \egal h_{\,L} \cdot X_{\,1} \plus   T_{\,1}  \moins  T_{\,\infty\,,\,L} \,, \\[4pt]
\frac{k}{e} \cdot \Bigl(\, X_{\,3} \moins X_{\,2} \,\Bigr) &  \egal - \, h_{\,R} \cdot X_{\,3} \,.
\end{align*}
The initial condition is:
\begin{align*}
X_{\,j} \egal 0 \, \qquad j \, \in \, \bigl\{\, 1 \,,\, 2 \,,\, 3\,\bigr\} \,.
\end{align*}
With this model, we compute the sensitivity of temperature with respect to parameter $h_{\,L}$ in the RC model. 

\end{document}